\renewcommand{\a}{\alpha}
\renewcommand{\b}{\beta}
\newcommand{\g}{\gamma}
\renewcommand{\d}{\delta}
\newcommand{\D}{\Delta}
\newcommand{\f}{\varphi}
\newcommand{\s}{\sigma}
\newcommand{\Si}{\Sigma}
\renewcommand{\l}{\lambda}
\renewcommand{\t}{\theta}
\renewcommand{\O}{\Omega}
\renewcommand{\o}{\omega}
\newcommand{\cF}{{\mathcal F}}
\newcommand{\cC}{{\mathcal C}}
\newcommand{\cB}{{\mathcal B}}
\newcommand{\cV}{{\mathcal V}}
\newcommand{\cG}{{\mathcal G}}
\newcommand{\cH}{{\mathcal H}}
\newcommand{\cI}{\mathcal I}
\newcommand{\bR}{\mathbb R}
\newcommand{\bP}{\mathbb P}
\newcommand{\be}{\begin{equation}}
\newcommand{\ee}{\end{equation}}
\newcommand{\tr}{\mathrm{tr}}
\newcommand{\beaa}{\begin{eqnarray*}}
\newcommand{\bea}{\begin{eqnarray}}
\newcommand{\beal}[1]{\begin{eqnarray}\label{#1}}
\newcommand{\bean}{\begin{eqnarray}\nonumber}
\newcommand{\beadl}[1]{\begin{deqarr}\label{#1}}
\newcommand{\eeadl}[1]{\arrlabel{#1}\end{deqarr}}
\newcommand{\eeal}[1]{\label{#1}\end{eqnarray}}
\newcommand{\eead}[1]{\end{deqarr}}
\newcommand{\eea}{\end{eqnarray}}
\newcommand{\eeaa}{\end{eqnarray*}}
\renewcommand{\to}{\rightarrow}
\DeclareMathOperator{\Met}{Met}
\renewcommand{\div}{\mathrm{div}}
\renewcommand{\phi}{\varphi}
\renewcommand{\epsilon}{\varepsilon}
\renewcommand{\hat}{\widehat}
\newcommand{\<}{\langle}
\renewcommand{\>}{\rangle}
\newcommand{\p}{\partial}
\newcommand{\dm}{{\partial M}}
\newcommand{\w}{\widetilde}
\theoremstyle{plain}
\newtheorem{theorem}{Theorem}[section]
\newtheorem{remark}[theorem]{Remark}
\newtheorem{proposition}[theorem]{Proposition}
\newtheorem{conjecture}[theorem]{Conjecture}
\theoremstyle{definition}
\newtheorem{definition}[theorem]{Definition}
\def\blacksquare{\hbox to .60em {\vrule width .60em height .60em}}
\numberwithin{equation}{section}
\begin{document}

\title[ ]{The Initial boundary value problem and quasi-local Hamiltonians in general relativity}

\author{Zhongshan An and Michael T. Anderson}

\address{Dept. of Mathematics,
University of Connecticut, 
Storrs, CT 06269}
\email{zhongshan.an@uconn.edu} 

\address{Dept. of Mathematics, 
Stony Brook University,
Stony Brook, NY 11790}
\email{michael.anderson@stonybrook.edu}

\begin{abstract}
We discuss relations between the initial boundary value problem (IBVP) and quasi-local Hamiltonians in GR. The latter have 
traditionally been based on Dirichlet boundary conditions, which however are shown here to be ill-posed for the IBVP. We present 
and analyse several other choices of boundary conditions which are better behaved with respect to the IBVP and carry out a 
corresponding Hamiltonian analysis, using the framework of the covariant phase space method. 
\end{abstract}

\maketitle 

\section{Introduction} 
\setcounter{equation}{0}

  This article is concerned with the initial boundary value problem (IBVP) for the vacuum Einstein equations and its relation to 
the existence of quasi-local Hamiltonians in general relativity. We consider then space-times $M$ topologically 
of the form $M \cong I\times S$, where $I=(-1,1)$ and $S$ is a compact 3-manifold with non-empty boundary $\p S = \Si$. Let 
$\cC = I\times \Si$ denote the boundary $\p M$ of $M$ with $\Si = \p S$. The initial boundary value problem is the problem 
of finding Lorentz metrics $g$ on $M$ satisfying the vacuum Einstein equations 
\be \label{vac}
\mathrm{Ric}_g = 0,
\ee
together with prescribed boundary conditions along $\cC$ and initial conditions along a Cauchy surface (e.g.~$S = \{0\}\times S \subset 
M$). More precisely, one would like to establish existence, uniqueness and stability of solutions with prescribed initial and boundary 
data.  

  The analogous situation for the Cauchy or initial value problem has long been well-understood and it is worthwhile to recall this 
briefly to gain the general perspective. The initial data $(g_S, K)$ on $S$ consist of a Riemannian metric $g_S$ and symmetric 
bilinear form $K$ satisfying the vacuum Einstein constraint equations; these give the induced metric and second fundamental 
form of a solution $g$ on $S$. 

  Let $(V,g)$ denote a vacuum development of an initial data set $(S, g_S, K)$, i.e.~a globally hyperbolic vacuum 
spacetime $V$ containing $(S, g_S, K)$. It was proved by Choquet-Bruhat \cite{CB} that vacuum developments always exist. 
Two vacuum developments $(V_1, g_2)$, $(V_2, g_2)$ are called equivalent if they contain a common sub-development. 
Choquet-Bruhat and Geroch \cite{CG} further proved that there is a maximal development $V_{\max}$, unique up to isometry, 
i.e.~unique up to the action of the group $\mathrm{Diff}_0(V_{\max})$ of diffeomorphisms of $V_{\max}$ fixing $S$ pointwise. 
Let $\cV$ be the space of such isometry classes of maximal solutions. 

The main results on the solution of the Cauchy problem can then be summarized in the following statement: for each Cauchy surface 
$S$, there is a bijective correspondence  
\be \label{mod1}
E_S: \cV \to \cI, \ \ g \to (g_S, K),
\ee 
where $\cI$ is the space of all initial data $(g_S, K)$ satisfying the vacuum constraint equations. 
The bijection \eqref{mod1} gives an effective parametrization of the space of maximal solutions of \eqref{vac} by their initial data. 
It would appear to be likely that $\cI$ and hence $\cV$ may be given the structure of a smooth infinite dimensional manifold, possibly 
away from a small singular set of special data. Further by work of Fischer, Marsden and Moncrief, cf.~\cite{FM}, $\cV$ or $\cI$ should 
carry a naturally defined (non-degenerate) symplectic form $\O$.\footnote{Such manifold and symplectic structure results hold when 
$S$ is a closed or asymptotically flat 3-manifold, cf.~\cite{FM}, but this has not yet been extended in detail to the situation of manifolds 
with boundary. Also, for simplicity, we forgo here any detailed discussion of the specific function spaces, e.g.~Sobolev spaces, and 
related issues of regularity. The topologies and all geometric data are assumed to be $C^{\infty}$ in the following.}

\medskip 
  One would like to obtain similar results and a similar understanding for the IBVP, and so in particular obtain a bijective 
correspondence 
\be \label{mod2}
D_S: \cV \to \cI \times_{c}\cB, \ \ g \to ((g_S, K), b)
\ee 
where $\cV$ now denotes the space of maximal globally hyperbolic vacuum spacetimes $(M, g)$ with time-like boundary $\cC$ 
and $\cB$ is a space of boundary data on $\cC$. The subscript $c$ denotes compatibility or corner conditions arising in 
solutions of the IBVP.  Again, a maximal solution should be unique up to isometry, i.e.~up to the action of $\mathrm{Diff}_0(M)$ of 
diffeomorphisms of $M$ fixing $S$ and $\cC$ pointwise. Of course to even start, this requires finding a suitable set of 
boundary data $\cB$ on $\cC$. 

  The issue of the solvability of an IBVP (with certain boundary conditions) was first addressed in the foundational paper of 
Friedrich-Nagy \cite{FN}. This was followed by work of Kreiss-Reula-Sarbach-Winicour \cite{KRSW1}, \cite{KRSW2} and more 
recently by works of Fournadavlos-Smulevici \cite{FS1}, \cite{FS2}; we refer to \cite{ST} for a general survey of this topic. 
Unfortunately, none of these works establishes a well-posedness result as expressed in \eqref{mod2}. 

\medskip 

  As with the Cauchy problem, the most natural boundary conditions are geometric boundary conditions, i.e.~those determined by the 
$\mathrm{Diff}_0(M)$-invariant Cauchy data at $\cC$, i.e.~the induced metric $\g$ and second fundamental form $A$ of $\cC$ in $(M, g)$. 

  Based on the familiar situation with simpler field theories, e.g.~scalar or Yang-Mills gauge-type fields, it is usually assumed in the 
physics literature that the boundary conditions for vacuum gravity are of Dirichlet or Neumann type. For theories which have a fixed 
space-time structure (so the metric field is a background field, not subject to variation), Dirichlet or Neumann data are well known 
to be well-behaved and are well-posed for the IBVP, cf.~\cite{M}, \cite{SS}, \cite{W} for instance. In particular one generally has a 
correspondence as in \eqref{mod2} for such boundary data. 

  On the other hand, in general relativity there are no fixed background fields and we will show below that the 
straightforward analogy does not hold: Dirichlet boundary data (fixing the induced metric on the boundary $\cC$) or Neumann boundary 
data (fixing the second fundamental form $A$ of $\cC$ in $M$) are ill-behaved or ill-posed boundary data for vacuum gravity.\footnote{We 
note that the recent result of \cite{FS2} for totally geodesic boundary data $A = 0$ does not prove well-posedness for general Neumann 
boundary data, which is the situation considered here.} This is proved in Proposition 2.1, cf.~also Proposition 2.2, below. We conjecture 
that the conformal-mean curvature (CH) boundary data 
\be \label{come}
([\g], H),
\ee
consisting of the conformal class of the boundary metric $(\cC, \g)$ and its mean curvature $H = \tr A$, do lead to a well-posed IBVP 
for which \eqref{mod2} holds, cf.~Conjecture 2.3. Some evidence for this conjecture is the result, given in Proposition 2.4, that the 
vacuum constraint equations are naturally solvable under the CH boundary conditions \eqref{come}. 

  However, currently, the only known boundary data for which the IBVP is known to be well-posed and for which \eqref{mod2} holds are 
the diffeomorphism-invariant boundary data (called AA data here) recently developed in \cite{AA}; these are also discussed in more 
detail in \S 2. 

\medskip 

   Next, we relate the issues above with a Hamiltonian analysis of GR on finite manifolds-with-boundary. 
Hamiltonian analyses of GR on manifolds with boundary at infinity are well-known and well-developed. The ADM and 
Regge-Teitelboim Hamiltonian analysis for asymptotically flat space-times with boundary at spatial or null infinity are foundational 
works for the notions of global energy-momentum and angular momentum (the ``conserved charges") of such space-times.  
The case of time-like anti-de-Sitter boundaries at conformal infinity has also been extensively studied in connection with the 
AdS-CFT correspondence. In contrast, there has been much less detailed analysis in the case of finite, time-like boundaries, 
although see \cite{M2}, \cite{M1} for instance.  

  The main concern of this paper is then a more rigorous understanding of the definition of quasi-local energies (Hamiltonians) 
through an analysis of the behavior of the space of solutions of a well-posed IBVP. There is a large literature on Hamiltonian approaches to a 
suitable definition of quasi-local energy, starting with the foundational work of Brown-York \cite{BY}; we refer to \cite{Sz} for a 
detailed survey. These are all based on Hamiltonians with Dirichlet boundary conditions. The focus here is instead on both general 
boundary data, and specific boundary data, namely the CH data \eqref{come} and the AA boundary data developed in \cite{AA}.

The traditional canonical phase space approach toward a Hamiltonian analysis breaks the general covariance of the theory by a 
choice time evolution and corresponding time-like vector field $\p_t$. This gives a space-like foliation or 3+1 decomposition of the 
space-time, and an associated lapse-shift pair $(N, X)$, $\p_t = NT + X$. Hamiltonians $\cH$ thus depend on this data, 
$\cH = \cH_{(N,X)}$, cf.~\cite{Sz}, \cite{Wa}. 

  A very useful alternative to this approach which does not break the full covariance of the theory is the covariant phase space method, 
which directly gives a (pre)-symplectic structure to the space of all vacuum solutions (with given boundary conditions).  
We refer to the survey article \cite{K} for a detailed description of the background and history of this method; important contributions 
to this topic are those of  Crnkovic-Witten \cite{CW} and in particular Lee-Wald \cite{LW}, Iyer-Wald \cite{IW1}, \cite{IW2} and Wald-Zoupas 
\cite{WZ}. 

   Here we are interested in the context of space-times $M$ with time-like boundary $\cC$ which has been discussed much less in the 
literature. Fortunately, an excellent description of the covariant phase space method with boundary has recently been given by Harlow-Wu 
\cite{HW} and we will generally follow this description (pointing out however a number of subtleties). In \S 3, we provide an overview of 
this method, as applied to general (not only Dirichlet) boundary data. We also develop a modification of the method needed to deal with 
the diffeomorphism-invariant AA boundary data from \cite{AA}. 

  In Propositions 4.1 and 4.3 of \S 4, we show that the CH boundary data and a slight modification of the AA boundary data 
admit Lagrangian descriptions and have a well-defined variational formulation. In Propositions 4.4 and 4.5, we provide a detailed 
description of the corresponding phase space and derive the associated Hamiltonians. 

  In \S 5, we then discuss the basic issue of the choice of zero-point energies for Hamiltonians. For Dirichlet boundary data, 
these have been based on Euclidean (in the case of Brown-York \cite{BY}) or Minkowski (in the case of Wang-Yau \cite{WY}) subtraction 
terms. We consider the situation of general boundary conditions and a general definition of subtraction term in Definition 5.2, reminiscent 
of the Bartnik definition \cite{Ba} of quasl-local energy.

\section{Well-posed boundary data in GR}
\setcounter{equation}{0}

  The primary method to solve an IBVP for the Einstein equations is to find a gauge in which the reduced Einstein equations form 
a quasilinear hyperbolic system and then determine boundary conditions in which reduced system is a well-posed IBVP. Of course 
one also needs to ensure that all solutions of the reduced Einstein equations are true vacuum solutions; this is the issue of constraint 
preservation. For such a method of solution, a necessary condition for well-posedness is that the IBVP localized at a standard flat Minkowski 
corner region ${\bf R} = \{t \geq 0, x^1 \leq 0\} \subset \bR^{1,3}$ is well-posed, i.e.~the frozen coefficient IBVP is well-posed. All 
known existence and uniqueness results for the vacuum IBVP are based on this method. 

In this respect, we note the following: 

\begin{proposition}
With respect to either Dirichlet or Neumann boundary conditions, the linearization of the vacuum equations \eqref{vac} at a standard Minkowski 
background is not a well-posed IBVP, for any choice of gauge reduction. 
\end{proposition}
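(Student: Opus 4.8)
The plan is to show that the frozen-coefficient IBVP at the flat corner fails the uniform Kreiss--Lopatinski condition --- a necessary condition for well-posedness by the reduction recalled just above --- and in fact fails it on a full open set of boundary frequencies, so that the problem is not even weakly well-posed. I work on the model half-space $\{x^1\le 0\}\subset\bR^{1,3}$ with boundary $\cC=\{x^1=0\}$, flat metric $\eta$, and unit normal $n=\p_{x^1}$. The two structural facts I would use are: (i) a pure-gauge perturbation $h=\cL_X\eta=2\,\mathrm{Sym}(\nabla X)$ is an exact solution of the linearized vacuum equations, so every constraint (gauge-violation) quantity vanishes identically on it; and (ii) since $\cC$ is totally geodesic in $(\bR^{1,3},\eta)$, the induced-metric variation of $h=\cL_X\eta$ along $\cC$ is $i^{*}h=\cL_{X^{\top}}\g$, depending only on the tangential part $X^{\top}$ of $X|_{\cC}$, while the second-fundamental-form variation $\d A$ along $\cC$ is the tangential Hessian $\mathrm{Hess}_{\g}(X^{\perp})$, depending only on the normal part $X^{\perp}=\langle X,n\rangle$ of $X|_{\cC}$. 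In particular neither Dirichlet nor Neumann data ``sees'' all of $X|_{\cC}$.

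For the Dirichlet case, fix any gauge reduction; it produces a hyperbolic system whose solutions, together with the accompanying constraint-preserving boundary conditions, are genuine vacuum solutions, and a pure-gauge perturbation $\cL_X\eta$ solves the reduced system provided $X$ solves the associated hyperbolic gauge equation $\cP X=0$ (for wave/harmonic gauge, $\cP=\Box$ and the reduced equations are $\Box h_{\mu\nu}=0$). For a boundary frequency with $\Real s>0$ I would take
\[
 h=\cL_X\eta,\qquad X=\p_{x^1}\, e^{\,st+i\xi_2 x^2+i\xi_3 x^3+\k x^1},\qquad \Real\k>0,
\]
where $\k$ is the root of the gauge dispersion relation with positive real part (for harmonic gauge $\k^2=s^2+\xi_2^2+\xi_3^2$). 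This $X$ is bounded on $\{x^1\le 0\}$, solves $\cP X=0$, and has $X^{\top}|_{\cC}=0$, so $i^{*}\cL_X\eta=0$, i.e.\ zero Dirichlet data; and, being pure gauge, it automatically satisfies every constraint-type condition that the reduction imposes on the remaining components of $h$. Hence $h=\cL_X\eta\not\equiv0$ is a bounded solution of the reduced Dirichlet IBVP with homogeneous data for every $s$ with $\Real s>0$: the Kreiss--Lopatinski condition fails on an open set, so the reduced IBVP is ill-posed, whatever the gauge. The Neumann case is dual: take instead $X=\p_t\, e^{\,st+i\xi_2 x^2+i\xi_3 x^3+\k x^1}$, so that $X^{\perp}|_{\cC}=0$ and $\d A(\cL_X\eta)|_{\cC}=\mathrm{Hess}_{\g}(X^{\perp}|_{\cC})=0$, i.e.\ zero Neumann data, while $X^{\top}|_{\cC}\neq0$; the same argument produces a nonzero bounded solution of the reduced Neumann IBVP with homogeneous data for all $\Real s>0$. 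Note the bad modes are pure gauge, which is exactly why moving to data (CH or AA) that pin down more of $X|_{\cC}$ at the boundary should cure the pathology.

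The step I expect to require genuine care is the clause ``for any choice of gauge reduction.'' For wave gauge everything is explicit, but in general one must check that the gauge-reduced linearized Einstein operator retains the two features used above: that the residual gauge equation $\cP X=0$ is hyperbolic with a normal characteristic root of positive real part over an open subset of $\{\Real s>0\}$, and that imposing $X^{\top}|_{\cC}=0$ (Dirichlet) or $X^{\perp}|_{\cC}=0$ (Neumann) still leaves a nontrivial bounded mode. This is in essence a counting point: the geometric boundary data supply only three of the four scalar conditions that would determine $X|_{\cC}$, and the missing fourth is spent on a constraint-type condition which a pure-gauge solution satisfies for free --- the contracted Bianchi identity forces the constraint quantities to obey a wave equation, so no admissible homogeneous boundary condition on them can exclude a pure-gauge solution of the reduced system. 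One finishes by invoking the standard passage from failure of the Kreiss--Lopatinski condition on an open set to genuine ill-posedness of the IBVP (superpose the bad modes over $\Real s\to\infty$ to defeat any a priori estimate with a fixed loss of derivatives), noting that restricting to the corner $\{t\ge 0\}$ only adds corner-compatibility conditions and cannot repair a failure already present in the boundary symbol.
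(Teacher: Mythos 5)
Your construction is correct in harmonic gauge, but it proves a different statement from the one the paper establishes here, and the difference matters precisely for the quantifier ``for any choice of gauge reduction.'' The paper's proof of Proposition 2.1 never touches the reduced hyperbolic system or the Lopatinski condition: it works directly with the constraint equations \eqref{scal}--\eqref{div} induced on $\cC$, which every true vacuum solution satisfies in every gauge. Linearizing at the flat corner (where $A = H = 0$), the scalar constraint forces $R'_{h^T} = 0$, so only scalar-flat variations of the Dirichlet data are attainable; the momentum constraint likewise imposes three scalar conditions on the six components of $A'_h$ in the Neumann case. This is an \emph{existence} obstruction: the achievable boundary data form a subset of infinite codimension, so no gauge reduction can produce solutions for generic data, and the universal quantifier over gauges comes for free because the gauge is never mentioned.

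Your argument instead exhibits a \emph{uniqueness/stability} failure via pure-gauge Lopatinski modes. The harmonic-gauge computation is right: $X = f\,\p_{x^1}$ gives $h^T = 0$ and $V'_h = -\Box X = 0$, while $X = f\,\p_t$ gives $\d A = 0$ at the totally geodesic boundary, so both modes solve the homogeneous reduced IBVP and defeat any estimate under the scaling $(s,\xi)\to\lambda(s,\xi)$. But this is in substance the paper's Proposition 2.2: the kernel elements there, $h_{01} = \tfrac12 h_{11} = f(x^1+t)$, are exactly $L_X g_0$ for a normal boundary vector field $X$, i.e., your Dirichlet mode with a characteristic profile in place of an exponential one (and the paper notes the same works for Neumann). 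The genuine gap is the one you flag yourself: for an arbitrary gauge reduction you do not control what the residual gauge equation $\cP X = 0$ is, whether it admits a decaying normal root, or which combinations of $h$ and its derivatives the constraint-preserving boundary conditions involve; the assertion that a bounded pure-gauge mode with vanishing geometric data always survives every admissible reduction is a plausible counting heuristic, not a proof. The constraint-based route closes exactly this hole, which is presumably why the paper separates the two arguments: gauge-independent non-existence in Proposition 2.1, and your type of explicit non-uniqueness, with the gauge fixed, in Proposition 2.2.
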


\begin{proof} The proof uses only the structure of the vacuum Einstein constraint equations (not the full Einstein equations) on the 
boundary $\cC$; 
\be \label{scal}
|A|^2 - H^2 - R_{\g} = E(\nu, \nu) = 0,
\ee
\be \label{div}
\div (A - H\g) = 2E(\nu, \cdot) = 0.
\ee
Here $E$ is the Einstein tensor of $g$, $R_{\g}$ is the scalar curvature of the boundary metric $(\cC, \g)$ and $\div$ is the divergence 
operator on symmetric $(2,0)$ tensors on $(\cC, \g)$; $\div h = \tr \nabla h$. As above, $A$ is the second fundamental form of 
$(\cC, \g) \subset (M, g)$ and the mean curvature $H$ is its trace. 

   Consider first Dirichlet boundary data, where the boundary metric $\g$ is prescribed. The linearization of the Hamiltonian or 
scalar constraint \eqref{scal} in the direction of the variation $h$ of $g$ is given by 
$$2\<A'_h, A\> - 2\<A^2, h^T\> - 2HH'_h - R'_{h^T} = 0,$$
where $h^T$ is the variation of the induced metric on the boundary. The linearization of these equations at the standard solid corner region 
$({\bf R}, g_0)$ of Minkowski space gives $A = H = 0$, 
so that 
$$R'_{h^T} = 0.$$
Thus, at the linearized level, only scalar flat deformations $h^T$ of the boundary metric $\g_0 = g_0|_{\cC}$, $\cC = \{x^1=0\}$, are 
possible. This is an infinite dimensional restriction; for example, most conformal deformations are excluded. It follows that the linearized 
constraint \eqref{scal} is not solvable for generic variations of boundary data $h^T$; the same applies then to the (more restrictive) 
full Einstein equations. 

 Next consider Neumann boundary data, where the second fundamental form $A$ is prescribed. The linearization of the 
momentum constraint \eqref{div} is given by 
\be \label{scal'}
\div_{\g}(A_h' - H_h'\g - H h^T) = -(\div)_h'(A - H\g).
\ee
At the standard solid region of Minkowski space, this gives 
\be \label{div'}
\div_{\g_0}(A'_h - H'_h\g_0) = 0.
\ee
where $\div_{\g_0}$ is the divergence operator with respect to the standard background Lorentz metric $\g_0$ on $\bR^{1,2}$. Thus the 
linearization of \eqref{div} is not solvable for generic $A'_h$; the form $A'_h$ has 6 degrees of freedom while \eqref{div'} gives 
3 scalar conditions on $A'_h$. This shows again that the linearized Einstein equations are not generically solvable at $g_0$ 
for prescribed variations $A'_h$. 

\end{proof}

From this, one expects that the space of vacuum solutions having given boundary metric (Dirichlet boundary data) or given 
second fundamental form (Neumann boundary data) is generically empty; the space of boundary data admitting solutions $g$ is of 
infinite codimension in the space $\cB$ of all boundary data. While Proposition 2.1 does not quite prove this, it does show that 
any proof of well-posedness cannot be based on solving a gauged IBVP for a quasi-linear hyperbolic system which localizes. 
More generally, one expects the behavior of the space of solutions with respect to either Dirichlet or Neumann boundary data 
to be highly unstable; given one solution with (say) Dirichlet boundary data $\g$ on $\cC$, an open set of nearby solutions have boundary 
data only in a subset of infinite codimension in the space $\cB = \Met(\cC)$ of boundary data. In particular, there is no correspondence as in 
\eqref{mod2}. 

   We also point out that there is a strong failure of uniqueness naturally associated with the failure of existence above. 

\begin{proposition}
In harmonic gauge, solutions to the linearized IBVP for the Einstein equations with Dirichlet boundary data have an infinite-dimensional 
kernel. 
\end{proposition}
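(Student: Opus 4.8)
The plan is to exhibit explicitly an infinite-dimensional family of metric variations $h$ solving the harmonically-reduced linearized vacuum equations at the standard Minkowski corner, lying in harmonic gauge, with vanishing Cauchy data on the initial slice $S=\{t=0\}$ and vanishing Dirichlet data $h^{T}|_{\cC}=0$ on $\cC=\{x^{1}=0\}$ of the corner region ${\bf R}=\{t\geq 0,\ x^{1}\leq 0\}$. The source of these solutions is residual gauge freedom: for any vector field $V$ on ${\bf R}$ the variation $h=\cL_{V}g_{0}$ satisfies $\Ric'_{h}=\cL_{V}\Ric_{g_{0}}=0$ identically, so it suffices to arrange three things — that $h$ is in harmonic gauge, that $h$ and $\p_{t}h$ vanish on $S$, and that $h^{T}$ vanishes on $\cC$.

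For the gauge condition I would use the standard identity $\p^{\mu}(\cL_{V}g_{0})_{\mu\nu}-\tfrac12\p_{\nu}\tr(\cL_{V}g_{0})=\Box_{g_{0}}V_{\nu}$, where $\Box_{g_{0}}$ is the flat wave operator: thus $h=\cL_{V}g_{0}$ is in linearized harmonic gauge exactly when $\Box_{g_{0}}V=0$, and then $\Box_{g_{0}}h_{\mu\nu}=\p_{\mu}\Box_{g_{0}}V_{\nu}+\p_{\nu}\Box_{g_{0}}V_{\mu}=0$, so $h$ also solves the reduced linearized equations. For the Dirichlet condition, since the model corner is totally geodesic ($A_{0}=0$) the tangential–tangential block of $\cL_{V}g_{0}$ along $\cC$ equals $\cL_{V^{T}}\g_{0}$, with $V^{T}$ the component of $V$ tangent to $\cC$; hence $h^{T}|_{\cC}=0$ as soon as $V$ is normal to $\cC$ along $\cC$. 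I therefore take $V=f\,\p_{1}$ with $f$ an arbitrary scalar and $\p_{1}$ the parallel unit normal; the gauge condition collapses to the single scalar wave equation $\Box_{g_{0}}f=0$, the components become $h_{11}=2\p_{1}f$, $h_{1\mu}=\p_{\mu}f$ for $\mu\neq 1$, and $h_{\mu\nu}=0$ for $\mu,\nu\neq 1$, and inspecting these shows that the Cauchy data of $h$ on $S$ vanishes provided $f|_{t=0}=0$ and $\p_{t}f|_{t=0}=0$, the remaining condition $\p_{t}^{2}f|_{t=0}=0$ being forced by $\Box_{g_{0}}f=0$.

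It then follows that the kernel contains every $h=\cL_{f\p_{1}}g_{0}$ for which $f$ solves the scalar mixed problem $\Box_{g_{0}}f=0$ on ${\bf R}$ with zero Cauchy data on $S$ and prescribed Dirichlet value $f|_{\cC}=\phi$. This problem is solvable for each boundary function $\phi$ satisfying the finitely many corner-compatibility relations at $\{t=0\}\cap\cC$, which hold automatically for $\phi$ supported away from $t=0$; moreover the assignment $\phi\mapsto f\mapsto h$ is injective, since $\nabla f$ is recovered from $h$ and then $f$ from $f|_{t=0}=0$. Letting $\phi$ range over an infinite-dimensional family of such boundary functions produces an infinite-dimensional space of solutions $h$, which proves the proposition; it also makes transparent why Dirichlet data fails to fix the gauge, namely that boundary gauge motions in the normal direction leave $h^{T}$ untouched.

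The only genuinely analytic input is the classical solvability of this scalar Dirichlet initial-boundary value problem for $\Box_{g_{0}}$ on the corner domain with compatible corner data; every remaining step is linear algebra, so I do not expect a real obstacle here beyond careful bookkeeping of the corner conditions.
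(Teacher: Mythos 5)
Your construction is correct, and it is in fact a strict generalization of the paper's: the paper exhibits the explicit incoming plane wave $h_{01}=\tfrac12 h_{11}=f(x^1+t)$ (all other components zero), which is precisely your $h=\cL_{F\p_1}g_0$ with $F'=f$ a function of $x^1+t$ alone, and verifies the bulk equation, gauge condition, initial conditions and Dirichlet condition by direct computation on that single ansatz. You instead organize the whole kernel-producing mechanism as residual gauge freedom normal to the boundary: the identity $V'_{\cL_Vg_0}=-\Box_{g_0}V$ reduces everything to the scalar mixed problem $\Box_{g_0}f=0$ with zero Cauchy data and arbitrary (corner-compatible) Dirichlet value on $\cC$, and the vanishing of $h^T$ comes for free from $A_0=0$ (indeed, with $V=f\p_1$ one has $h_{\mu\nu}=0$ for $\mu,\nu\neq 1$ identically, so $h^T\equiv 0$). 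What the paper's route buys is complete self-containedness — no appeal to solvability of a scalar IBVP, just algebra on an explicit wave — while yours buys a larger family, an injectivity argument that cleanly certifies infinite-dimensionality (recover $df$ from $h$, then $f$ from its vanishing on $S$), and a transparent explanation of \emph{why} Dirichlet data fails: normal boundary gauge motions are invisible to $h^T$. One bookkeeping remark: the paper's system \eqref{elin} imposes $V'_h=0$ only as a boundary condition on $\cC$, whereas your solutions satisfy it identically in the bulk; that is a stronger property and so your $h$ certainly lies in the kernel of \eqref{elin}. The checks all go through (in particular $\p_t^2 f|_{t=0}=0$ is forced by the wave equation and $f|_{t=0}=0$, as you note), so there is no gap.
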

\begin{proof}
Consider the initial boundary value problem for vacuum Einstein equations with Dirichlet boundary data:
\begin{equation*}
\begin{split}
& \mathrm{Ric}_g=0\ \ {\rm in}~M,\ \  g|_{S}=g_S,~\tfrac{1}{2}L_{T}g|_{S}=K\ \ {\rm on}~S, \ \ g_{\cC}=\g\ \ {\rm on}~\cC.
\end{split}
\end{equation*}
Let $x^{\a} = (t, x^1, x^2, x^3)$ be a local chart near the corner of $M$, mapping to the corner region ${\bf R} \subset \bR^{1,3}$. 
Assume $x^{\a}$ are harmonic, so that $V^{\a} = \Box_g x^{\a} = 0$. The kernel of the linearization of the 
system above at the Minkowski corner ${\bf R}$ is then given by:
\be \label{elin}
\begin{split}
& \mathrm{Ric}'_h+\d^*(V'_h)=0\ \ {\rm in}~\mathbf R\\
& h_{\a\b}=0,~\p_t h_{\a\b}=0\ \ {\rm on}~\{t=0,x^1\leq 0\}\\
&V'_h=0,~h^T=0\ \ {\rm on}~\{x^1=0\},
\end{split}
\ee
where $V'_h=-\div h + \frac{1}{2}d\tr h$. We show that the system \eqref{elin} has an infinite-dimensional space of solutions. 

  Set $h_{\a\b}=0$ in $\mathbf R$ for all $\a,\b$ except $(\a,\b)=(1,1),(0,1)$ and let 
\be \label{h01}
h_{01}=\tfrac{1}{2}h_{11}=f(x^1+t)
\ee
for some $f$ determined below. Then clearly $(\p_t-\p_{x^1})h_{11}=0,\p_{x^A} h_{11}=0$ for $A=2,3$. It follows that the bulk 
equation in \eqref{elin} holds: $(\mathrm{Ric}'_h+\d^*(V'_h))_{\a\b} = \Box h_{\a\b}=0$ in ${\bf R}$, for all $\a,\b$. 

Let $f: \bR \to \bR$ be any smooth function such that $f(x)=0, f'(x)=0$ for all $x\leq 0$ but $f(x)\neq 0$ for $x>0$. Then on the 
initial surface $\{t=0,x^1\leq 0\}$, $h_{01}=h_{11}=\p_th_{01} = \p_th_{11} = 0$, so that $h_{\a\b}$ satisfies the initial conditions. 
It remains to verify the boundary conditions. The Dirichlet boundary condition $h^T=0$, which reads $h_{00}=h_{0A}=h_{AB}=0$ 
for $A,B=2,3$, is obviously true. Further simple calculation also gives $V'_h = 0$, cf.~\cite[eqn. (3.6)]{AA} for example. (This is 
the reason for the particular relation between $h_{01}$ and $h_{11}$ in \eqref{h01}). 

Thus the deformation $h$ constructed above solves the homogeneous IBVP; however, $h_{11}\neq 0$ in the region $\{t+x^1>0\}\subset{\bf R}$. 
In other words, incoming wave solutions generated by such $f$ show that solutions to the linearized IBVP \eqref{elin} are not unique outside the 
domain of dependence of the initial surface. 
\end{proof}

Note that the same result and proof also work well for Neumann boundary data. 

\medskip 

  This raises the question of whether there are any geometric boundary data on $\cC$ for which the IBVP could be well-posed. We note that 
in the simpler Riemannian or Euclidean setting, it was proved in \cite{A1} that neither Dirichlet nor Neumann boundary data give a 
well-posed elliptic boundary value problem for the vacuum Einstein equations, in any gauge. On the other hand, it was shown there 
that the data 
\be \label{gH}
([\g], H),
\ee
consisting of the conformal class $[\g]$ of the boundary metric $\g$ and the mean curvature $H = tr A$, are well-posed 
elliptic boundary data, in a suitable gauge. (There exist other choices of well-posed elliptic boundary data, but the choice \eqref{gH} 
is perhaps the most natural). This was extended to the parabolic Ricci-flow setting in \cite{G}. 

  By analogy, it is natural then to consider the same data for the IBVP in the Lorentzian setting \eqref{vac}. We make 
the following 

\begin{conjecture}
The IBVP for the vacuum Einstein equations \eqref{vac} is well-posed with respect to the mixed boundary data $([\g], H)$ on $\cC$. 
\end{conjecture}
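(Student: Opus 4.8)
The plan is to follow the general scheme recalled above --- fix a gauge making the reduced Einstein equations a quasilinear hyperbolic system, prescribe boundary conditions for which the reduced IBVP is strongly well-posed, and verify constraint propagation --- but with a gauge adapted to the CH data $([\g],H)$ rather than to Dirichlet data. The guiding model is the elliptic result of \cite{A1}: there the data $([\g],H)$ are ill-posed in the naive harmonic gauge but become well-posed in a \emph{modified} harmonic gauge, in which the boundary value of the gauge vector $V^\a = \Box_g x^\a$ is coupled to geometric quantities (in particular to $H$) rather than simply set to zero. The first step is to identify the Lorentzian counterpart of this gauge: one works in wave coordinates in the interior, so that $\mathrm{Ric}'_h + \d^* V'_h = \tfrac12 \Box h + (\text{lower order})$ and the reduced system is a quasilinear system of $10$ wave equations for the components of $g$; on $\cC$ one then imposes the six geometric conditions carried by $([\g],H)$ --- that the linearized induced metric be pure trace ($5$ conditions, the linearized conformal condition) together with the prescription of $H$ ($1$ condition) --- supplemented by four conditions on the boundary values of $V^\a$ chosen so that the resulting $10$ boundary conditions are of Robin type and constitute a well-posed boundary operator for the wave system, in parallel with the Dirichlet split $\{h^T = 0,\ V'_h = 0\}$ used in the proof of Proposition 2.2.

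The analytic core is then to show that this frozen-coefficient IBVP at the standard Minkowski corner ${\bf R}\subset\bR^{1,3}$ is strongly well-posed, i.e.\ that the boundary conditions above satisfy the uniform Kreiss--Lopatinski condition: the homogeneous half-space problem $\Box h = 0$ with these boundary conditions admits no nontrivial mode bounded into the interior at any frequency with nonnegative real part. As in \cite{A1}, one expects this to reduce, after a Laplace--Fourier transform in $(t,x^2,x^3)$, to a finite-dimensional determinant condition; the delicate point is that the conformal condition fixes the boundary metric only modulo a positive scalar, so one must check that the four gauge conditions and the scalar $H$ together genuinely control that missing conformal direction. I would first check whether the reduced system admits a symmetric-hyperbolic reduction with \emph{maximally dissipative} boundary conditions, which would render this step automatic; failing that, one invokes the general Kreiss theory, or the recent energy methods of Fournadavlos-Smulevici \cite{FS1}, \cite{FS2}, whose treatment of the totally geodesic case $A = 0$ is the natural thing to extend to prescribed $H$ with free conformal class. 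Together with the corner conditions below, strong well-posedness of the linearization would yield, by the usual iteration, local-in-time well-posedness of the quasilinear reduced IBVP.

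Two further ingredients complete the argument. First, constraint propagation: the contracted Bianchi identity forces the gauge vector $V$ to satisfy a homogeneous linear wave equation, and one must show that, given the vacuum constraints on the initial slice $S$, the chosen boundary conditions induce boundary conditions for this $V$-equation admitting only the trivial solution, so that $V\equiv 0$ on $M$ and every solution of the reduced system is a genuine vacuum solution; indeed the precise form of the modified gauge should be pinned down by exactly this requirement, as it is in the elliptic setting of \cite{A1}. Second, corner compatibility: the initial data $(g_S,K)$ on $S$ and the boundary data $([\g],H)$ on $\cC$ must be compatible to infinite order along the corner $\Si = S\cap\cC$; the zeroth-order content of this is precisely the solvability of the vacuum constraint equations under the CH conditions, which is Proposition 2.4, and the higher-order corner conditions follow inductively by differentiating the equations. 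Patching the resulting local solutions and running the standard maximal-development argument (as in \cite{CG}), adapted to the presence of the time-like boundary $\cC$, would then produce a maximal globally hyperbolic development, unique up to $\mathrm{Diff}_0(M)$, and hence the bijective correspondence \eqref{mod2} for CH boundary data.

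The hard part will be reconciling the last two stages: the gauge must be chosen so that the boundary conditions simultaneously satisfy the Kreiss condition for the reduced Einstein system \emph{and} yield a well-posed, trivially-solvable problem for the constraint-propagation system, and these two demands pull the gauge in opposite directions --- this tension is essentially what has kept the geometric IBVP open. A secondary, related difficulty is that the conformal boundary condition is mildly degenerate, annihilating the conformal direction of the boundary metric, so one must rule out any boundary instability hidden along that direction --- a complication that does not arise for the fixed-background field theories in which Dirichlet or Neumann data are automatically well-behaved.
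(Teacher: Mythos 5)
The statement you are addressing is a \emph{conjecture}: the paper offers no proof of it, explicitly deferring a detailed discussion elsewhere, and supplies only Proposition 2.4 (solvability of the boundary constraint equations with prescribed $([\g],H)$) as supporting evidence. Your proposal is a reasonable research program --- it is broadly the program the authors themselves gesture at, namely a modified gauge in the spirit of \cite{A1} transplanted to the hyperbolic setting, verification of the frozen-coefficient problem at the Minkowski corner, constraint propagation, and corner compatibility --- but it is not a proof, and you should not present it as one.

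The gap is that every step carrying actual mathematical content is deferred. You never write down the four boundary conditions on the gauge source $V^\a$; without them the Lopatinski determinant you invoke cannot even be formulated, let alone shown to be nonvanishing, and that computation \emph{is} the well-posedness claim. Proposition 2.2 of the paper shows concretely what can go wrong: in harmonic gauge with Dirichlet data, explicit incoming-wave solutions $h_{01}=\tfrac12 h_{11}=f(x^1+t)$ satisfy all homogeneous initial and boundary conditions, producing an infinite-dimensional kernel. Your argument nowhere excludes an analogous construction for the CH conditions in your (unspecified) gauge --- note that the mode exhibited there has $h^T=0$ and $H'_h=0$, so it is not obviously killed by the conformal and mean-curvature conditions alone; whatever rules it out must come from the four gauge conditions you have not chosen. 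Likewise the constraint-propagation step is asserted, not verified: you correctly identify that the boundary conditions for the homogeneous wave equation satisfied by $V$ must admit only the trivial solution, but you do not derive those induced conditions or check this. Finally, your own closing paragraph concedes that the two requirements ``pull the gauge in opposite directions'' and that this tension ``is essentially what has kept the geometric IBVP open'' --- which is an accurate self-assessment that the central difficulty remains unresolved. As written, the proposal establishes nothing beyond what Proposition 2.4 already provides.
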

  
  There is strong evidence in favor of this conjecture, but a detailed discussion would be lengthy and is out of place here; we hope to 
address Conjecture 2.3 in detail elsewhere. 

  As a first justification for the conjecture, we note the following result, which contrasts strongly with Proposition 2.1. 
  
\begin{proposition}
The boundary data $([\g], H)$ are well-posed in terms of the constraint equations \eqref{scal}-\eqref{div}. In particular, the 
constraint equations on the boundary $\cC$ can be solved with any prescribed $([\g], H)$. 
\end{proposition}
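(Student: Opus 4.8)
The plan is to reduce the pair of boundary constraints \eqref{scal}--\eqref{div} to a \emph{determined} system by the conformal method of Lichnerowicz and York, here applied to the $3$-dimensional Lorentzian manifold $\cC = I\times\Si$, and then to read off both the structural ``well-posedness'' --- the contrast with the over-determined behavior in Proposition 2.1 --- and the solvability for arbitrary data $([\g],H)$.

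First I would fix any reference representative $\g_0 \in [\g]$ and seek the unknown metric in the form $\g = \f^{4}\g_0$ with $\f > 0$, and the unknown second fundamental form in York form
\be
A \;=\; \tfrac{1}{3}H\,\g \;+\; \f^{-2}\bigl(\s + \mathbb{L}_{\g_0}W\bigr),
\ee
where $\mathbb{L}_{\g_0}W = \cL_W\g_0 - \tfrac{2}{3}(\div_{\g_0}W)\,\g_0$ is the conformal Killing operator of $\g_0$, $W$ an unknown vector field, and $\s$ a $\g_0$-transverse, $\g_0$-trace-free symmetric $2$-tensor --- the ``free'' data, which may be set to $0$. By construction $\tr_\g A = H$, so the prescribed trace is built in. Using the conformal covariance of the divergence on trace-free tensors and of the scalar curvature in dimension $3$, together with $\div_{\g_0}\s = 0$, the momentum constraint \eqref{div} becomes the linear vector equation $\div_{\g_0}(\mathbb{L}_{\g_0}W) = \tfrac{2}{3}\,\f^{6}\,dH$ for $W$, while the scalar constraint \eqref{scal} becomes the Lichnerowicz-type equation
\be
8\,\Box_{\g_0}\f \;=\; R_{\g_0}\,\f \;+\; \tfrac{2}{3}\,H^2\,\f^{5} \;-\; \f^{-7}\,\bigl|\s + \mathbb{L}_{\g_0}W\bigr|_{\g_0}^{2}
\ee
for $\f$ (with signs fixed by the curvature conventions), where $\Box_{\g_0}$ is the wave operator of $(\cC,\g_0)$.

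The structural point is now immediate, and should be the substance of the ``well-posedness'' assertion. In the Dirichlet/Neumann cases of Proposition 2.1 the prescribed datum itself entered an equation ($R'_{h^T}=0$, resp.~\eqref{div'}) whose cokernel is infinite-dimensional; here, by contrast, $([\g],H)$ enter only as the coefficients $R_{\g_0}$, $H^2$ and $dH$ of a determined system --- the vector equation together with the Lichnerowicz equation --- for the genuine unknowns $(\f,W)$, with $\s$ free. Linearizing this system about a reference solution, the principal part is $\Box_{\g_0}$ acting on $\f$ together with $W \mapsto \div_{\g_0}\mathbb{L}_{\g_0}W$ acting on $W$, both invertible; hence the linearized constraint map onto the residuals is surjective. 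This is the precise sense in which $([\g],H)$ are well-posed boundary data for \eqref{scal}--\eqref{div}.

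I expect the genuine existence statement --- solving the vector and Lichnerowicz equations for $(\f,W)$ over \emph{all} of $\cC$ and for \emph{arbitrary} $([\g],H)$ --- to be the main obstacle, for two reasons. First, $\cC$ is Lorentzian, so these are hyperbolic, wave-type equations rather than elliptic ones; one therefore supplies auxiliary Cauchy data for $(\f,W)$ on the cross-section $\{0\}\times\Si$ --- for instance $\f\equiv 1$, $\p_\nu\f = 0$, $W = 0$, $\p_\nu W = 0$, for which \eqref{scal}--\eqref{div} hold there for the flat corner --- and invokes standard existence theory for quasilinear hyperbolic systems to obtain a solution near that cross-section, hence on all of $\cC$ when $I$ is short enough, or on all of $\cC$ perturbatively by the implicit function theorem (using the surjectivity above) when $([\g],H)$ is near reference data. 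Second, when $dH \neq 0$ the two equations are genuinely coupled (the non-CMC case), so one must run a joint fixed-point iteration in $(\f,W)$ rather than solving them in turn, and keeping $\f>0$ along the iteration is where the real work lies. Granting this, $\g = \f^{4}\g_0 \in [\g]$ together with $A$ as above solves \eqref{scal}--\eqref{div} with the prescribed $([\g],H)$, as claimed.
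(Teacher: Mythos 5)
Your proposal is correct and follows essentially the same route as the paper: the conformal method transferred to the time-like hyperbolic setting, yielding the same coupled vector/Lichnerowicz system of semi-linear wave equations for the conformal factor and the vector potential (the paper writes the conformal Killing term with a densitized lapse $\tfrac{1}{2\mu}\hat L_X\g_0$, but this is only a normalization), solved by prescribing Cauchy data on a cross-section $\Si$ and invoking standard local existence for hyperbolic systems. Your only overstatement is that the non-CMC coupling requires "real work": unlike the elliptic (space-like) case, local-in-time existence for coupled semi-linear wave systems handles the coupling and the positivity of $\f$ near the initial slice automatically, which is precisely why the paper notes the time-like setting is better behaved than the far-from-CMC initial data problem.
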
 

\begin{proof} This is proved by transferring the usual conformal method of Lichnerowicz-Choquet-Bruhat-York from the space-like 
elliptic setting to the time-like hyperbolic setting. 

Let $\g_0$ be a representative metric in $[\g]$ (not to be confused with the Minkowski metric on $\bR^{1,2}$ used above), with volume 
form $dv_{\g_0}$ and let $\a$ be a background volume form on $\cC$. Form then the so-called densitized lapse (cf.~\cite{Max}), 
$$\mu = {\tfrac{1}{2}}\frac{dv_{\g_0}}{\a}.$$
Let $\s$ be a transverse-traceless (2,0) tensor with respect to $\g_0$; $\div \s = \tr \s = 0$. 
In the conformal method, cf.~\cite{Max} again, one then constructs the Cauchy data $(\g, A)$ at $\cC$ by setting 
$$\g = \l^{4}\g_{0},$$
$$A = \l^{-2}(\s + {\tfrac{1}{2\mu}}\hat L_{X}\g_{0}) + {\tfrac{1}{3}}H\l^{4}\g_{0},$$
where $\hat L_{X}\g_{0}$ is the conformal Killing operator with respect to $\g_{0}$: $\hat L_{X}\g_{0} = L_{X}\g_{0} - 
(\frac{2}{3}\div_{\g_{0}}X) \g_{0}$. The constraint equations \eqref{scal}-\eqref{div} in this time-like setting then become 
a coupled system of equations for $(\l, X)$ which take the form 
\be \label{confm}
\begin{array}{c}
\div_{\g_0}({\tfrac{1}{2\mu}}\hat L_{X}\g_{0}) = {\tfrac{2}{3}}\l^{6}dH,\\
8\Box_{\g_0} \l = R_{\g_0}\l - |\s + {\tfrac{1}{2\mu}}\hat L_{X}\g_0|^{2}\l^{-7} + {\tfrac{2}{3}}H^{2}\l^{5}.
\end{array}
\ee
Here $\div_{\g_0}$ and $\Box_{\g_0}$ are the divergence and wave operator with respect to $\g_{0} \in [\g]$ and $R_{0}$ is the scalar 
curvature of $\g_{0}$. One has $\div \hat L_X(\g_0) = \div \d^*X - \frac{2}{3} d\div X$ and standard Weitzenbock formulas show that 
$\div \hat L_X(\g_0) = -D^*D X + \frac{1}{2}d\div X + E(X)$, where $E(X)$ involves only the curvature of $\g_0$ and no derivatives of $X$.   

   The system \eqref{confm} is thus a coupled system of semi-linear wave equations for $(\l, X)$ and so by standard theory is 
well-posed; one has existence and uniqueness of  solutions for at least some positive time $t \in [0, T)$, given suitable initial conditions. 

  It is also easy to see that the York decomposition \cite{Yo} holds on $(\cC, [\g])$; any symmetric bilinear form $h$, (e.g.~$h = A$ above) 
admits a decomposition $h = \s + \frac{\tr h}{3}\g_0 + \hat L_{X}\g_0$, where $\s$ is transverse-traceless with respect to $\g_0$. 
The vector field $X$ is unique up to choice of initial conditions on a slice $\Si$ to $\cC$. The proof of this is essentially the same as 
in the Riemannian or space-like case, using hyperbolic equations in place of elliptic equations.

\end{proof}

  Thus the space of solutions of the Einstein constraint equations on a time-like boundary $\cC$ is parametrized by the free 
data $([\g], H, \s)$ and initial conditions at $\Si$. This time-like boundary setting is actually significantly better behaved than the analogous 
situation for initial data, where the corresponding result is false. In general, given $([g_S], k, \s)$, $k = \tr_S K$, on an initial slice 
$S$, it is not always possible obtain existence of (unique) solutions to the constraint equations. The conformal method does not 
always work well for example when $H$ is far away from the set of constant functions (the far from CMC regime). 

\begin{remark}
{\rm  We make some further remarks comparing Propositions 2.1 and 2.4, focusing on the constraint equations with Dirichlet data. 

   Consider first initial data, i.e.~the issue of solving the constraint equations on a Cauchy surface with given Dirichlet data, i.e.~initial 
metric $\g$. In bulk dimension 3, this is the problem of isometric immersion of a given Riemannian surface $(S, g_S)$ into 
$\bR^{1,2}$, by the fundamental theorem of hypersurfaces in $\bR^{1,2}$, (at least when $S$ is simply connected). We are not aware 
of significant results on this topic. The analogous isometric immersion problem of a Riemannian surface in Euclidean space $\bR^3$ 
is notoriously hard and largely unresolved except in the case of positive Gauss curvature $K_{\g} > 0$; this is the solution of the 
Weyl embedding problem by Nirenberg and Pogorelov.\footnote{Note that in this Riemannian situation, the condition $K_{\g} > 0$ 
is an open condition, so that there are open sets of $C^{\infty}$ solutions of the constraint equations with given Dirichlet data, in 
strong contrast with Proposition 2.1. Nevertheless, the constraint equations \eqref{scal}-\eqref{div} are not a well-posed or elliptic 
system, even when $K_{\g} > 0$. Instead, the main idea is to quotient the space $\Met(S)$ by the positive definite second fundamental 
form $K$, where the problem does become (underdetermined) elliptic, cf.~\cite{Ham}, \cite{A1}, \cite{W}. This method is thus restricted 
to the convex case.}

   Passing to Dirichlet data on a Cauchy surface in 4 dimensions is the problem of isometrically immersing a given Riemannian 3-manifold 
into some vacuum space-time. Again, we are not aware of significant results on this problem.  

   It seems even less is known in the setting of time-like boundaries, i.e.~results or a theory of isometrically immersing Lorentzian 
surfaces into flat $\bR^{1,2}$ or similarly, a theory of isometrically immersing Lorentzian 3-manifolds into a vacuum spacetime. 
On the other hand, Proposition 2.4 shows the situation is very well understood for $([\g], H)$ data (for all dimensions). 
}
\end{remark}

\begin{remark}
{\rm  Once one solves the time-like constraint equations, if the data are analytic, then the Cauchy-Kovalevsky theorem can be 
used to obtain actual vacuum solutions defined in a neighbhorhood of $\cC$; this gives local two-sided solutions, i.e.~solutions 
extending to each side of the boundary $\cC$. In more detail, given a pair $(\g, A)$ of analytic data satisfying the constraint equations 
\eqref{scal}-\eqref{div}, one may work in the gauge of geodesic normal coordinates normal to $\cC$ where $g(\nu, \cdot) = \d_{\nu \, \cdot}$. 
In this gauge, the Einstein equations are ``evolution equations" off the boundary in the normal $\nu$-direction, and for analytic data the 
Cauchy-Kovalevsky theorem gives the existence of analytic solutions in a neighborhood of $\cC$. Of course the simple scalar 
model here is the Laplace equation $\D u = 0$ or wave equation $\Box u = 0$ with prescribed Cauchy data on $\cC$.  

 While this is not a well-posed boundary value problem, it is a delicate issue to understand whether a solution $g$ defined in a 
neighborhood of $\cC$ extends to a smooth solution on $M$. 
}
\end{remark}

   In the discussion of geometric boundary conditions above, note that the number of degrees of freedom for the boundary data 
($\g$, or $A$, or $([\g], H)$) is $6$. This is the correct and expected number for the gauge group $\mathrm{Diff}_0(M)$ of diffeomorphisms 
fixing $\cC$ pointwise, i.e.~(maximal) solutions with the same initial and boundary data are expected to be unique up to diffeomorphisms 
fixing an initial slice $S$ and the boundary $\cC$. However, the space of all solutions is invariant under the larger gauge group 
$\mathrm{Diff}(M)$, which essentially differs from $\mathrm{Diff}_0(M)$ by the group $\mathrm{Diff}(\cC)$ of boundary diffeomorphisms. 
The group $\mathrm{Diff}(\cC)$ has 3 degrees of freedom and so the number of true boundary conditions describing non-isometric 
solutions should be $6-3=3$. Of these $3$ degrees of freedom, $2$ account for the degrees of freedom of the gravitational field, 
and $1$ for the location or evolution of the boundary $\cC$ off the initial corner surface $\Si$. Note that it may not be easy to find a 
global slice to (for instance) the space of Dirichlet data $\Met(\cC)/\mathrm{Diff}(\cC)$.

\medskip 

  Next we discuss the boundary data introduced in \cite{AA}. This is based on the construction of a global preferred 
harmonic-type gauge for a metric $(M, g)$, at least near the boundary $\cC$. The preferred gauge depends on the 
space-time $(M,g)$ as well as a vector field $\Theta_{\cC}$ defined on $\cC$; different choices of $\Theta_{\cC}$ give 
different choices of time evolution or space-like foliation of the space-time $(M, g)$. The gauge is given by a wave map 
\be \label{wm}
\begin{array}{c}
\f_g: (M, g) \to (M_0, g_R),\\
\Box_g \f_g  = 0,
\end{array}
\ee
where for simplicity the target space $M_0$ is taken to be a standard solid cylinder $I \times B^3$ in Minkowski space $\bR^{1,3}$ with 
standard time function $t$. The Riemannian metric $g_R$ is the associated flat Euclidean metric. Initial data for $\f_g$ on a 
Cauchy slice $S$ are given by 
\be \label{id}
\f_g = E_{g_S}, \ \ (\f_g)_*(T_g) = T_{g_R} \ \ {\rm on} \ \ S,
\ee
where $T_g$ is the $g$-future unit normal to $S$ in $(M, g)$, $T_{g_R}$ is the future unit normal to $\{t = 0\}$ in $M_0$ and 
$E_{g_S}$ is determined by $g_S = g|_S$ equivariantly with respect to $\mathrm{Diff}(S)$ near the boundary $\Si$, i.e.
$$E_{\psi^*(g_S)} = \psi^*E_{g_S} = E_{g_S}\circ \psi.$$
We refer to \cite{AA} for details regarding the assignment $g_S \to E_{g_S}$ and for concrete examples of such. 

   The boundary data for $\f_g$ are of Sommerfeld type given by 
\be \label{bd}
[(\f_g)_*(T_g + \nu_g)]^T = \Theta_{\cC},
\ee 
together with the (scalar) Dirichlet condition that $\f_g: \cC \to \p M_0 = I\times S^2$. Here $\nu$ is the $g$-unit normal of 
$\cC \subset M$ and the superscript $T$ denotes orthogonal projection with respect to $g_R$ onto $\p M_0$. Thus, we view $\Theta_{\cC}$ 
as a vector field tangent to $\cC_0 = \p M_0$. 

The equations \eqref{wm} are hyperbolic and initial boundary value problems of this type are well-understood; it follows from such 
known results that the IBVP \eqref{wm}-\eqref{bd} is well-posed. In particular, given a background metric $g$, there exists a unique 
solution $\f_g$ which is a diffeomorphism in a neighborhood of the initial and time-like boundary $T = S \cup \cC$. Moreover (at least 
near $T$), the solution is equivariant with respect to diffeomorphisms, so that 
\be \label{equi1}
\f_{\psi^*g} = \psi^* \f_g,
\ee
for any $\psi \in \mathrm{Diff}(M)$. 

  Next consider the coupled system $(g, \f_g)$ 
\be \label{couple}
\begin{array}{c}
\mathrm{Ric}_g = 0, \\
\Box_g \f_g = 0,
\end{array}
\ee
with initial conditions for $g$ given by 
\be \label{idg}
g|_S = g_S, \ \ K_g |_S = K,
\ee
where $(g_S, K)$ satisfy the constraint equations \eqref{scal}-\eqref{div}. The boundary conditions for $g$ are expressed on the target 
$\cC_0 = \p M_0$ as: 
\be \label{bcg}
([\g^t], H^*, \Theta_{\cC}).
\ee
Here $\g^t = (\f_g^{-1})^*g|_{\Si_t}$, is the metric induced by the pullback $(\f_g^{-1})^*$ of $g$ to $M_0$ restricted to the 
slices $\Si_t = \{t = constant\}$ in $\cC_0$ and $[ \cdot ]$ denotes the conformal class of the 2-metric. The term $H^*$ is a combination 
of mean curvatures of the form 
\be \label{Hstar}
H^* = \a \tr_{\cC}A |_{\Si_t} + \b \tr_{\Si_t}A + \g \tr_{\Si_t}K,
\ee
for certain values of $\a, \b, \g$, (cf.~\cite{AA} for the exact allowable coefficients). One admissible choice of $H^*$, which also appears 
in \cite{KW}, is $H^* = 2\tr_{\cC}A - \tr_{\Si_t}A$; this choice will be discussed further in \S 4.  The initial conditions for $\f_g$ are 
as in \eqref{id}.

   It is proved in \cite{AA} that this is a well-posed IBVP. For any choice of compatible initial data \eqref{idg} and boundary data 
\eqref{bcg}, there is a vacuum solution $g$ realizing the initial and boundary conditions. Such solutions have the covariance property
\be \label{equi2}
\bar g = \psi_g^*(g) =\psi_{\chi^*g}^*  (\chi^*g),
\ee
for any $\chi \in \mathrm{Diff}(M)$, where $\psi_g = \f_g^{-1}$. This gives a slice (at least near the boundary $\cC$) to the action of 
$\mathrm{Diff}(M)$ on the space of vacuum solutions. Thus, diffeomorphisms no longer act on the metrics $\bar g$; these metrics are 
evaluated in the fixed standard coordinates $(t, x^i)$ on $\bR^{1,3}$. The solutions $\bar g$ are unique with given initial and boundary 
data and one has stability of solutions and the existence of maximal solutions. Moreover, any vacuum solution is isometric 
to such a solution in a preferred gauge. Note that since the boundary data $([\g^t], H^*, \Theta_{\cC})$ always 
admit solutions, it follows that the constraint equations \eqref{scal}-\eqref{div} on $\cC$ are automatically satisfied; in fact the 
form of the $H^*$ term in \eqref{Hstar} derives partly from the form of the constraint equations. 

    An important point is that, by construction, the boundary conditions \eqref{bcg} are invariant under the action of 
$\mathrm{Diff}(M)|_{\cC} = \mathrm{Diff}(\cC)$ on the space of vacuum solutions. However, the choice 
of $\Theta_{\cC}$ determines a particular or preferred gauge $\f_g$, giving a particular foliation or 3+1 decomposition of 
the space-time $(M, g)$ in which the boundary data are measured. The relevant space of (geometric) boundary data is then 
\be \label{bgeom}
\cB_{geom} = I\times \mathrm{Conf}(\Si)\times C^{\infty}(\Si) = \{ (t, [\g^t]_{\Si_t}, H^*_{\Si_t})\},
\ee
where $\mathrm{Conf}(\Si)$ is the space of pointwise conformal classes $[\g^t]$ of smooth metrics on $\Si$ and $H^* \in C^{\infty}(\Si)$. 
The space $\cB_{geom}$ has $3$ degrees of freedom, corresponding to $\mathrm{Diff}(\cC)$ invariant boundary data. 
This is in marked contrast with the discussion of geometric boundary data (Dirichlet, Neumann or $([\g], H)$) above, which are 
not $\mathrm{Diff}(\cC)$ invariant.

The well-posedness of this IBVP then gives a family of effective parametrizations as in \eqref{mod2}, i.e. 
\be \label{mod3}
D_{S, \Theta_{\cC}}: \cV_{geom} \to \cI_{geom} \times_c \cB_{geom},
\ee
depending on the choice of Cauchy surface and boundary gauge $\Theta_{\cC}$. The space $\cV_{geom}$ is now the space of full isometry 
classes of solutions (vacuum Einstein metrics), i.e.~the space of all solutions modulo the action of the full diffeomorphism group
 $\mathrm{Diff}(M)$. Accordingly $\cI_{geom}$ is the space of equivalence classes of Cauchy data $(g_S, K)$ on a Cauchy slice, 
 where $(g_S, K) \sim (g_S', K')$ if there is a space-time diffeomorphism $\chi \in \mathrm{Diff}(M)$, $\chi(S) = S$, such that 
 $(\chi^*g_S', \chi^*K') = (g_S, K)$. 

  A variational description and Hamiltonian analysis of this boundary data will be discussed in the following sections.

\section{The IBVP and Quasi-local Hamiltonians} 
\setcounter{equation}{0}

  In this section, we discuss the covariant phase space method applied to vacuum GR with a time-like boundary, following 
\cite{HW}, \cite{LW}, \cite{IW1}. 

Consider Lagrangians with boundary terms describing vacuum gravity, 
\be \label{I}
I(g) = I_{EH}(g) + \int_{\cC}\ell(g) = \int_M R_g dv_g + \int_{\cC}\ell(g).
\ee
(We have chosen units where $16\pi G = 1$). 
The bulk Einstein-Hilbert action $I_{EH}$ is invariant under the diffeomorphism group $\mathrm{Diff}(M)$; only boundary 
actions $\int_{\cC}\ell$ which are also $\mathrm{Diff}(\cC)$ invariant will be considered.  Different choices of boundary action 
correspond to different choices of boundary data space $\cB$. As a simple example, consider the case of a scalar field 
$u: M \to \bR$ with action given by 
\be \label{ID}
I_D(u) = \int_M ({\tfrac{1}{2}}|du|^2 + V(u))dv_g,
\ee
where $g$ is a fixed background globally hyperbolic Lorentz metric on $M$ and $V(u)$ is a function involving only 0-jet of $u$. 
Then the variation of $I_D$ in the direction $v = u'$ is given by
$$\frac{d}{dt}I_D(u+tv)|_{t=0} = -\int_M v (\Box u - V'(u))  - \int_{\cC}v \p_{\nu}u - \int_{S^+ \cup S^-}\Psi_u(v),$$
where $\Psi_u(v) = v\p_Tu$ and $T$ is the unit outward time-like normal to the future and past the slices $S^{\pm}$. This term is 
ignored in the variational problem, cf.~\cite{HW}, so that the variational derivative is given by 
$$\d I_D(v) = -\int_M v (\Box u - V'(u))  - \int_{\cC}v \p_{\nu}u.$$
The boundary condition that $I_D$ is stationary $\d I_D = 0$ then requires that $v = 0$ at $\cC$, i.e.~the Dirichlet data $u|_{\cC}$ of $u$ 
is fixed. Hence the space $\cB$ is the space of Dirichlet boundary data of $u$, $\cB = C^{\infty}(\cC)$. The corresponding equations of 
motion are  
\be \label{el}
\Box u + V'(u) = 0.
\ee
If one adds a boundary action of the form 
\be \label{IN}
I_N(u) = \int_M ({\tfrac{1}{2}}|du|^2 + V(u))dv_g + \int_{\cC}u\p_{\nu}u,
\ee
then a similar analysis shows that the boundary condition becomes $\p_{\nu}v = 0$, i.e.~the Neumann boundary data $\p_{\nu}u$ 
of $u$ is fixed, with corresponding boundary data space again $\cB = C^{\infty}(\cC)$.

  In either case above, one has a well-defined variational problem on the field or configuration space $\cF = C_b^{\infty}(M)$ of functions 
$u: M \to \bR$ with $u$ satisfying a fixed boundary condition $b \in \cB$ (Dirichlet or Neumann) at the boundary $\cC$; namely, 
$I_D$, resp. $I_N$, is a smooth function on $C^{\infty}(M)$ whose critical points are solutions of the Euler-Lagrange equations 
\eqref{el}. Moreover, standard theory of the IBVP for the (non-linear) wave equation \eqref{el} implies that the space of solutions $\bP$ of 
\eqref{el} satisfies the analog of \eqref{mod2}, 
$$\bP \simeq \cI \times_c \cB.$$
The covariant phase space method discussed below defines a natural symplectic structure on $\bP$, with corresponding Hamiltonians, 
cf.\cite{CW}, \cite{HW} for further details. 

\medskip 
We emphasize here that the issues of having a well-defined variational problem for a field theory and a well-posed IBVP for 
the equations of motion are fundamentally distinct and independent. In particular, the property of having a well-defined variational 
problem does not at all imply the property of having a well-posed IBVP. 
 
\medskip 

  Returning to GR and the action \eqref{I}, let $\cF = \Met(M)$ be the configuration space of all maximal globally hyperbolic Lorentz metrics 
on $M$ with time-like boundary $\cC$. The pre-phase space $$\w \bP \subset \cF$$
is defined to be the space of globally hyperbolic solutions of the vacuum Einstein equations on $M$. 
(The terminology on-shell for data in $\w \bP$ and off-shell for data in $\cF$ is often used). The bulk diffeomorphism 
group $\mathrm{Diff}(M)$, mapping $\cC \to \cC$, acts naturally on both spaces by pullback. Although it is geometrically 
natural to consider the quotient space of isometric metrics $\Met(M)/\mathrm{Diff}(M)$, it will be seen below that the quotient 
$\w \bP / \mathrm{Diff}(M)$ does not generally serve as a suitable candidate for the phase space. 

  We first discuss the ``bare" situation without any boundary terms. For the bare EH action, 
$$I_{EH}(g) = \int_{M}R_g dv_g,$$
one has $\d_g (R_gdv_g)(h) = -\<E_g, h\>dv_g  + d\t(h)$ where $d\t(h) =( - \Box tr h + \div \div h)dv_g$. The 3-form 
\be \label{theta}
\t(h) = -\star(d\tr h - \div h),
\ee
is called the pre-symplectic potential and plays a central role in this discussion. Note that for any 2-form $C$ on $M$, 
$d(\t - dC) = d\t$.  On-shell, i.e.~tangent to $\w \bP$, $(R_gdv_g)' = 0$ and $E_g = 0$, so that, on $T\w \bP$, 
\be \label{tc}
d \t(h) = 0
\ee
pointwise on $M$. In other words, $\t$ is a closed 3-form on $M$ on-shell. It follows that 
\be \label{v1}
\frac{d}{dt} I_{EH}(g+th)|_{t=0} = 0,
\ee
on-shell. Thus (in contrast to the situation in \eqref{ID} or \eqref{IN}) it might appear that there are {\it no} boundary conditions 
imposed by the requirement of stationarity. However, \eqref{v1} only gives $\int_{\dm}(\t - dC)(h) = 0$, i.e. 
$$\int_{\cC} (\t - dC)(h) = \int_{S^+}(\t - dC)(h) - \int_{S^-}(\t - dC)(h).$$
This shows there is a (global) coupling of the variation at the future and past Cauchy surfaces with the variation at the boundary $\cC$. 
For a well-defined variational problem, one requires instead that 
\be \label{v0}
\d I_{EH}(h) = \int_{\cC}(\t - dC)(h)= 0,
\ee
for all on-shell variations $h$. 

   On $\cC$, $\t(h) = -(\nu(\tr h) -\div h(\nu))dv_{\g}$ is a gauge, i.e.~$\mathrm{Diff}_0(M)$ dependent term. To obtain a 
geometric or gauge-invariant boundary condition, recall the basic identity (valid on any hypersurface) 
\be \label{h}
-\t(h) =\big(\nu(\tr h)  - (\div h)(\nu) \big)dv_{\g}= \big(2H'_h + \<A, h\> + \div(h(\nu)^T)\big)dv_{\g}.
\ee
It is natural to set 
\be \label{Ch}
C(h) = \star h(\nu)^T \ \ {\rm on} \ \  \cC, 
\ee 
and we do so for the remainder of the paper; the discussion to follow is independent of the extension of $C$ off $\cC$. Also set 
\be \label{hatt}
\hat \theta = \theta - dC.
\ee
Then by \eqref{h}, \eqref{v0} is equivalent to 
\be \label{v2}
\int_{\cC}(\<A, h\> + 2H'_h)dv_{\g} = 0,
\ee
This corresponds to geometric boundary conditions $h^T = H'_h = 0$. However, these are now $7$ boundary conditions and so these 
are over-determined boundary conditions, (cf.~the discussion in \S 2). 

   This shows that it is necessary to include boundary Lagrangians $\ell$ to obtain a well-defined variational problem. To begin, the 
action \eqref{I} is stationary on the subspace of $h \in T\w \bP$ such that 
\be \label{bc}
[(\t - dC +\d \ell)|_{\cC}](h) = 0.
\ee
Thus associate to the action \eqref{I} a space of (geometric) boundary data $\cB$ and consider the natural evaluation or restriction map 
$$R: \w \bP \to \cB,$$
equivariant with respect to the action of diffeomorphisms. Given a choice of boundary data $b \in \cB$, the pre-phase space 
$\w \bP_b$ is defined as the inverse image 
\be \label{pre}
\w \bP_b = R^{-1}(b):
\ee
$\w \bP_b$ is the space of vacuum metrics with boundary data in $b$. Then \eqref{bc} is equivalent to  
$$T \w \bP_b = {\rm Ker}\, ([\t - dC +\d \ell]_{\cC}).$$
Via \eqref{h}, the boundary condition \eqref{bc} is thus equivalent to 
\be \label{bc3}
-\<A, h\> - 2H'_h + \ell'_h = 0 \ \ {\rm on} \ \ \cC.
\ee

Of course in general $\w \bP_b$ will no longer have an effective induced action of $\mathrm{Diff}(M)$; the choice of $b$ 
breaks the symmetry, i.e.~breaks the isometric action of $\mathrm{Diff}(M)$ on $\Met(M)$ to that of the subgroup $\mathrm{Diff}_b(M)$ 
preserving the boundary condition $b \in \cB$.

  The pre-symplectic current $\o$ on $\w \bP_b$, or more generally on $\w \bP$ or $\cF$, is given by 
\be \label{omega}
\o = \d \hat \t ,
\ee
where $\d$ is the exterior derivative on $\cF$. Thus,  
$$\o(h_1, h_2) = (\hat \t(h_2))'_{h_1} - (\hat \t(h_1))'_{h_2},$$
where $h_1$, $h_2$ are tangents to a 2-parameter deformation in $\w \bP_b$, (so that the formal correction term 
$\t([h_1, h_2])$ vanishes). 

Given a Cauchy surface $S$, the pre-symplectic form is a 2-form on $\w \bP_b$ defined by 
$$\w \O(h_1, h_2) = \int_S \o(h_1, h_2) = \int_S (\hat \t(h_2))'_{h_1} - (\hat \t(h_1))'_{h_2}.$$
Using \eqref{h}, this gives 
$$\w \O(h_1, h_2) = -\int_S \<K'_{h_1}, h_2\> - \<K'_{h_2}, h_1\> + {\tfrac{1}{2}}[tr_S h_1(\<K, h_2\> + 2k'_{h_2})  - 
tr_S h_2(\<K, h_1\> + 2k'_{h_1})],$$ 
where $K$ is the extrinsic curvature of $S$ in $(M,g)$ and $k = tr_S K$. Writing $\pi = (K - k g_S)dv_{g_S}$, where $K - k g_S$ is 
the momentum conjugate to $g_S$, this may be written (in standard form) as 
\be \label{symp2}
\w \O(h_1, h_2) = -\int_S \<\pi'_{h_1}, h_2\> - \<\pi'_{h_2}, h_1\>.
\ee
This gives a non-degenerate symplectic form off-shell, i.e.~on the configuration space $\cF$, cf.~\cite{LW}. However, the vacuum constraints 
and boundary conditions imply that $\w \O$ is degenerate on-shell, i.e.~on $\w \bP$ or $\w \bP_b$. 

 The boundary condition \eqref{bc} implies $\o = \d \d \ell = 0$ on $\cC$. Since $\o$ is a closed $3$-form on $M$, it follows 
that $\w \O$ is well-defined, independent of the Cauchy surface $S$, on $\w \bP_b$. This is not the case without the boundary 
condition \eqref{bc}; thus $\w \O = \w \O_S$ gives a pre-symplectic form on the full space $\w \bP$ which depends on the 
Cauchy surface $S$. 

 Hamiltonians are functions $\cH: \w \bP_b \to \bR$ for which there is a vector field $X_{\cH}$ such that 
\be \label{Ham}
\w \Omega(X_{\cH}, \cdot) = \d \cH.
\ee
If $\w \O$ were non-degenerate, then any smooth function $\cH: \w \bP_b \to \bR$ is a Hamiltonian. Namely, setting $\w \O = \O$ in 
the non-degenerate situation, the symplectic gradient $X_{\cH}$ is given by 
$$X_{\cH} = \O^{-1}(\d \cH, \cdot),$$
where $\O$ defines the pairing: $T\bP \to T^*\bP$. If, as it is here, $\w \O$ is degenerate, then Hamiltonians may not exist. 
A Hamiltonian $\cH$ must be invariant under the flow generated by the degenerate directions ${\rm Ker} \,\w \O$. Moreover, $\cH$ 
is always conserved, i.e.~$\cH$ is invariant under the flow of the Hamiltonian vector field $X_{\cH}$ on $\w \bP_b$. 

\medskip 

The actual phase $\bP_b$ with boundary data $b \in \cB$ is given by a standard symplectic reduction process, cf.~\cite{MW}, as the 
formal quotient  
\be \label{qg}
\bP_b = \w \bP_b/\cG,
\ee
where $\cG$ is the Lie group generated by the Lie algebra ${\rm Ker}\, \w \O$. Then $\w \O$ descends to a (non-degenerate) symplectic 
form on $\bP_b$, giving it formally a symplectic structure. 

As the primary example of interest here, consider a vector field $\xi$ generating a flow in $\mathrm{Diff}(M)$ by spacetime 
diffeomorphisms of $M$. This generates a flow on $\w \bP_b$ in the usual way by pullback, with tangent vector $L_{\xi}g \in T_g\w \bP_b$.  

As shown in \cite{LW}, \cite{IW1}, \cite{HW}, the Hamiltonian 
$\cH_{\xi}$ is given by
\be \label{Ham1}
\cH_{\xi} = \int_{\Si} q_{\xi} + \xi \cdot \ell - C(L_{\xi}g),
\ee
where $q_{\xi}$ is the Noether potential given by 
$$q_{\xi} = -\star d\xi,$$
viewing $\xi$ as a 1-form on $\cC$. We consider $\ell$ as a scalar density, $\ell = \ell dv_{\g}$ on $\cC$, so that $\xi \cdot \ell = 
\xi\lfloor dv_{\cC}$. On $\Si$, $\star d \xi = d\xi(T, \nu) = \<\nabla_{T}\xi, \nu\> - \<\nabla_{\nu}\xi, T\>$, 
where $\nu$ is as above and $T$ is the unit normal to $\Si \subset \cC$. Since $L_{\xi}g(T, \nu) = \<\nabla_{T}\xi, \nu\> + 
\<\nabla_{\nu}\xi, T\>$ and $\<\nabla_{T} \xi, \nu\> = -A(\xi, T)$, this gives 
\be \label{Ham2}
\cH_{\xi} = \int_{\Si}(-2A(\xi, T) + \ell\<\xi, T\>)dv_{\g_{\Si}}. 
\ee
Note that $\cH_{\xi}$ is independent of the slice $\Si$ since the variation $h = L_{\xi}g$ satisfies the boundary 
condition \eqref{bc}, i.e.~$L_{\xi}g \in T \w \bP_b$. Otherwise, as with $\w \O$, this will not be the case for general 
$\xi \in T\mathrm{Diff}(M)$. 

  Clearly if $\xi = 0$ on $\cC$ (or on suitable domains in $\cC$), then $\cH_{\xi} = \d \cH_{\xi} = 0$. so the symplectic form $\w \O$ is 
degenerate on vectors tangent to $\mathrm{Diff}_0(M)$. Thus one always has 
\be \label{og}
\mathrm{Diff}_0(M) \subset \cG.
\ee
On the other hand, the existence of non-trivial Hamiltonians generated by $\xi$ as above shows that generally $\mathrm{Diff}_b(M)$ is not 
contained in $\cG$. 
 
  Finally we note that a Hamiltonian $H_{\xi}: \bP_b \to \bR$ as in \eqref{Ham1} is determined by \eqref{Ham} only up to an 
additive constant; the constant may depend on the boundary condition $b \in \cB$. The choice of the normalizing 
constant plays an important role in Hamitonian approaches to the issue of quasi-local energy and is discussed in more detail in \S 5.

\medskip 

  The covariant phase space method is related to the perhaps more well-known canonical phase space method by the choice of a 
Cauchy surface $S$. Namely, given such an $S$, one has a {\it formal} identification 
\be \label{ES}
E_S: \hat \bP_b \simeq \cI, \ \ E_S(g) = (g_S, K),
\ee
where $\hat \bP_b = \w \bP_b / \mathrm{Diff}_0(M)$ and $\cI$ is the space of initial data $(g_S, K)$ satisfying the constraint 
equations. Here $\mathrm{Diff}_0(M)$ is the group of diffeomorphisms of $M$ fixing $T = S \cup \cC$. Of course different choices 
of Cauchy surface give different parametrizations $E_S$. The data $(g_S, K)$, or more precisely $(g_S, \pi)$, serve as the ``$q$, $p$ 
variables" for equivalence classes of solutions in $\hat \bP_b$ and so such data effectively serve as a choice of coordinates for 
$\hat \bP_b$. Solutions $g \in \w \bP_b$ are then given by a specific choice of time evolution along a curve of Cauchy surfaces 
$S_t$ with associated lapse-shift data $(N, X)$. 

\medskip 

  This brief summary of the covariant phase space method with boundary neglects however several significant issues. 
As discussed in \S 2, there are seemingly natural situations, e.g.~Dirichlet boundary data where $\cB = \Met(\cC)$, for 
which the identification \eqref{ES} fails badly. For generic boundary data $b \in \cB$, one may have $\w \bP_b = \emptyset$. 
Moreover, the structure of $\w \bP_b$ may be highly unstable with respect to variations of $b \in \cB$. 
For the purely formal considerations outlined above to accurately reflect the behavior of the space of solutions $\w \bP$ or $\w \bP_b$ 
(or their symplectic quotients), these spaces should be smooth infinite dimensional manifolds (or with small, understood singular 
sets) with the submanifolds $\w \bP_b$ varying smoothly with $b$ (at least generically). For this, one needs at least the {\it a priori} 
existence of charts for these spaces effectively parametrizing them locally. 

   It appears that the only method to do this is to prove that the IBVP associated to $\w \bP_b$, $b \in \cB$, is well-posed. In this case, 
one has natural parametrizations 
\be \label{mod4}
\hat \bP = \w \bP/\mathrm{Diff}_0(M) \simeq \cI \times_{c} \cB,
\ee
as in \eqref{mod2}. As discussed in \S 2, this does not hold for Dirichlet or Neumann boundary data. Further, based on the analogous 
result for closed or asymptotically flat Cauchy surfaces $S$, cf.~\cite{FM}, one would expect that
$$\hat \bP \simeq \bP,$$
at least generically. 

\begin{remark} 
{\rm An alternate path to the definition of Hamiltonians is the Hamilton-Jacobi method, used for instance by Brown-York \cite{BY}, 
cf.~also \cite{Sz} for a survey.  Here one also considers the space $\w \bP$ of (maximal) globally hyperbolic vacuum solutions on $M$ 
and a corresponding space $\cB$ of boundary data (Dirichlet data in the case of \cite{BY}). Analogous to the Hamilton-Jacobi method in 
classical mechanics, a Hamiltonian is defined as the variation of the on-shell action \eqref{I} with respect to a choice of time translation $\p_t$ 
of the boundary data $b \in \cB$, expressed in terms of a lapse-shift $(N, X)$ decomposition of $\cC$. For this approach to be effective, 
one needs to know that the action $\cI: \w \bP_b \to \bR$ is a smooth function of $b$ and so in particular the space 
$\w \bP_b$ varies smoothly with $b$. Again, as discussed in \S 2, this cannot be expected to be true for Dirichlet 
(or Neumann) boundary data. 

}
\end{remark} 

\begin{remark}
{\rm A shortcoming of the Hamiltonian approach above is that Hamiltonians associated to infinitesimal generators $\xi \in T\mathrm{Diff}(\cC)$ 
only exist for $\xi$ preserving the boundary condition, i.e.~$\xi \in T\mathrm{Diff}_b(\cC)$. For generic boundary data, no such $\xi$ 
exist and so the theory applies only in very special circumstances. 

 One would thus like to develop a well-defined Hamiltonian for arbitrary $\xi \in T\mathrm{Diff}(\cC)$. Note first that it 
follows easily from the constraint equations that $\d I (L_{\xi}g) = 0$ for any vector field $\xi$ of 
compact support on $\cC$ (for any diffeomorphism invariant boundary action). More generally, for $h = L_{\xi}g$, the constraint 
equations show that the boundary condition \eqref{bc3} may be rewritten in the form 
\be \label{di}
\div (\tau(\xi)) + \div(\ell \xi) + \div h(\nu)^{T} - dC(h) = 0 \ \ {\rm on} \ \ \cC.
\ee
Hence for infinitesimal diffeomorphisms, the boundary condition \eqref{bc} or \eqref{bc3} is only a condition at the 
future and past boundaries $\Si^{\pm}$. This suggests that ideally one should seek boundary conditions which are 
preserved under the action of $\mathrm{Diff}(\cC)$. 

  Nevertheless, \eqref{di} does not imply one can construct Hamiltonians with, for example, diffeomorphism-invariant Dirichlet boundary 
data, i.e.~boundary data in the quotient space $\Met(\cC)/\mathrm{Diff}(\cC)$. For such a larger space of admissible boundary data, 
Hamiltonians will no longer be conserved and it is necessary to work in the context of time-dependent Hamiltonians. This has been done 
using extended phase space techniques in work of Booth \cite{Bo} and Booth-Fairhurst \cite{BF}. 
 
}
\end{remark}

The covariant phase space method applied to the $\mathrm{Diff}(\cC)$-invariant boundary data in \cite{AA} discussed in \S 2 is rather 
different from the discussion above. To explain the situation, as before let $\w \bP$ be the space of all (maximal, globally hyperbolic) 
vacuum Einstein metrics on $M$. Given a Cauchy surface $S \subset M$, and a choice of boundary vector field $\Theta_{\cC}$ on 
$\cC_0$ as in \eqref{bd}, as discussed in \S 2 there is a unique wave map $\f_g: (M, g) \to (M_0, g_R)$ associated to $g$. As in 
\eqref{equi1}, the assignment $g \to \f_g$ is equivariant with respect to the action of $\mathrm{Diff}(M)$ on $\w \bP$. Let 
$\psi_g = \f_g^{-1}$ and let $\chi(\cC_0)$ be the space of vector fields on the boundary $\cC_0$. Define then a map 
\be \label{gt}
\begin{array}{c}
\w \bP \times \chi(\cC_0) \to \bar \bP,\\
(g, \Theta_{\cC}) \to \bar g = \psi_g^*g,
\end{array}
\ee
where $\bar \bP$ is the space of vacuum metrics on $M_0$ in the standard $(t, x^i)$ Minkowski coordinates. 
The group $\mathrm{Diff}(M)$ acts on the first factor on the left and acts trivially on the second and the map \eqref{gt} 
descends to the quotient to give a bijection 
$$(\w \bP \times \chi(\cC_0))/\mathrm{Diff}(M) \to \bar \bP.$$
Variations $\bar h$ of $\bar g$ are of the form 
\be \label{wh}
\bar h = \psi_g^*h - L_{X_{h}}\bar g - L_{X_{\Theta'}}\bar g \in T\bar \bP,
\ee
where $h$ is an on-shell variation of $g$ and $X_h$, $X_{\Theta'}$ are the variations of the wave map $\f_g$ induced by the 
variation $h$ of $g$ (with $\Theta_{\cC}$ fixed) and variations $\Theta'$ of $\Theta_{\cC}$ with $g$ fixed. Regarding the latter, 
fix a vacuum solution $g$ and consider variations $\Theta_{\cC}(t)$ of the Sommerfeld boundary data $\Theta_{\cC}$ for $\f_g$, 
satisfying the compatibility conditions at the corner $\Si$, holding the initial conditions fixed. One has unique wave map 
solutions $\f_g(t)$ of \eqref{wm}-\eqref{bd} realizing the boundary conditions $\Theta_{\cC}(t)$. By uniqueness, the pair 
$(g, \f_g(t))$ are solutions of the coupled system \eqref{couple}-\eqref{bcg} satisfying the initial data and boundary data 
$([\g^t], H^*, \Theta_{\cC}(t))$. This gives a curve of gauges, i.e.~3+1 decomposition or space-time foliation of the fixed metric 
$g$. Then  
\be \label{fvar}
X_{\Theta'} = \frac{d}{dt}\f_g(t)|_{t=0},
\ee
where $\Theta' = \frac{d}{dt}\Theta_{\cC}(t)$. 
 
  One may now carry out the covariant phase space method on the space $\bar \bP$. The definition of the corresponding pre-symplectic 
current $\bar \t$ is just as in \eqref{theta}, acting on $\bar h \in T\bar \bP$. Similarly $\bar \o$ is defined as in \eqref{omega}, leading to 
the (pre)-symplectic form 
\be \label{symaa}
\bar \O(\bar h_1, \bar h_2) = \int_{S_0} \<\pi'_{\bar h_1}, \bar h_2\> - \<\pi'_{\bar h_2}, \bar h_1\>,
\ee
where $S_0 = \{t = 0\}$. Although $d \bar \o = 0$, $\bar \o$ is non-zero on $\cC$ in general, without any boundary condition, so that 
$\bar \O$ is not conserved, i.e.~independent of the Cauchy slice $S_0$. 

   A boundary action and associated boundary condition is discussed below in \S 4, cf.~Proposition 4.3. We also introduce there an 
expanded phase space allowing for the existence of non-trivial Hamiltonians; see Propositions 4.4 and 4.5 below.

\section{Boundary actions in GR} 
\setcounter{equation}{0}

  In this section, we consider several boundary actions $\int_{\cC}\ell$ for the Einstein-Hilbert action $I_{EH}$ and discuss  
the resulting Hamiltonians generated by space-time diffeomorphisms. 

  The physics literature has long (only) considered the Gibbons-Hawking-York boundary term $\frac{1}{8\pi}\int_{\cC}Hdv_{\g}$ 
and the corresponding action  
$$I_{GHY} = I_{EH} + 2\int_{\cC}Hdv_{\g},$$
(where again we set $16\pi G = 1$). The variation of the boundary Lagrangian is $(H'_h + \frac{1}{2}\<H\g, h^T\>)dv_{\g}$, 
where $h^T$ is the variation of the boundary metric $\g$. Inserting this in \eqref{bc3} gives the well-known 
formula 
\be \label{tau}
\d I_{GHY} = -2\int_{\cC}\<\tau, h^T\>dv_{\g},
\ee
on $\w \bP$ where $\tau = A - H\g$ is the momentum conjugate to the boundary metric $\g$. The form $\tau$ is also the 
Brown-York stress-energy tensor of the boundary $\cC$, i.e.~the variation of the action with respect to variation of the Dirichlet boundary 
data, cf.~\cite{BY}, \cite{HW}.

  This gives a well-defined variational problem, i.e.~\eqref{bc} holds, for Dirichlet boundary conditions $h^T = 0$, so that 
$\cB = Met(\cC)$. However, as discussed in \S 2, Dirichlet boundary conditions are not well-posed for the IBVP; for 
generic boundary data $\g$, there is not any vacuum solution realizing $\g$, i.e.~$\w \bP_{\g} = \emptyset$. 

  Nevertheless, one may formally compute the Hamiltonian of a vector field $\xi$ tangent to $\cC$ preserving the boundary 
condition, i.e.~$\xi$ a Killing field on $(\cC, \g)$. Simple computation from \eqref{Ham2} gives 
\be \label{byh}
\cH_{\xi}^{Dir} =  -2\int_{\Si}\tau(\xi, T)dv_{\g_{\Si}}.
\ee
This is the Dirichlet or Brown-York ``bare" Hamiltonian; normalizations of $\cH_{\xi}^{Dir}$ corresponding to a choice of zero-point 
energy are discussed further in \S 5. One may of course always define \eqref{byh} formally or by fiat, but the Hamiltonian interpretation 
is only well-defined in the special circumstances where the boundary $(\cC, \g)$ admits a Killing field, cf.~Remark 3.2. 
Recall also from \S 3 that in this case the Hamiltonian is independent of the choice of slice $\Si$.

  Consider next the boundary conditions $([\g], H)$ in \eqref{gH} which are conjectured to be well-posed for the IBVP, 
cf.~also \cite{A2}.  

\begin{proposition} 
The boundary conditions $([\g], H)$ have a well-defined variational formulation. In fact, the action 
\be \label{ch}
I_{CH} = \int_M R dv_g + \frac{2}{3}\int_{\cC}H dv_{\g},
\ee
has variational derivative given by 
\be \label{vch}
\d I_{CH}(h) = -\int_M\<E, h\>dv_g - \int_{\cC}(\<\tau_0, h^T_0\> + \frac{4}{3}H'_h)dv_{\g},
\ee
where the subscript $0$ denotes the trace-free part with respect to $(\cC, \g)$. 

  If $\xi$ is a vector field on $\cC$ which preserves the boundary conditions, i.e.~$\xi$ is a conformal Killing field preserving $H$, then 
\be \label{chh}
\cH_{\xi}^{CH} = -2\int_{\Si}\tau_0(\xi, T)dv_{\g_{\Si}}.
\ee
\end{proposition}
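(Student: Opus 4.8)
The plan is to prove \eqref{vch} by a direct variational computation building on the formula for $\d I_{EH}$ recalled in \S 3, and then to read off \eqref{chh} by substituting the boundary Lagrangian of \eqref{ch} into the general Hamiltonian formula \eqref{Ham2}.

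For the first assertion I would start from $\d_g(R_g\,dv_g)(h) = -\<E_g,h\>dv_g + d\t(h)$ and integrate over $M$: the bulk gives $-\int_M\<E,h\>dv_g$, while $\int_M d\t(h) = \int_{\cC}\t(h)$ plus terms on the Cauchy surfaces $S^\pm$ which are discarded exactly as in the scalar-field discussion of \S 3. On $\cC$ the basic identity \eqref{h} gives $\t(h)|_{\cC} = -\big(2H'_h + \<A,h^T\> + \div(h(\nu)^T)\big)dv_{\g}$, and the divergence term is exact on $\cC$, contributing only a corner integral over $\p\cC$ which is dropped. Then I would compute $\d\big(\tfrac23\int_{\cC}H\,dv_{\g}\big)(h) = \tfrac23\int_{\cC}\big(H'_h + \tfrac12 H\,\tr_{\g}h^T\big)dv_{\g}$ from $(dv_{\g})'_h = \tfrac12(\tr_{\g}h^T)dv_{\g}$. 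Adding the two contributions and collecting terms yields
\[
\d I_{CH}(h) = -\int_M\<E,h\>dv_g - \int_{\cC}\Big(\tfrac43 H'_h + \big\<A - \tfrac13 H\g,\,h^T\big\>\Big)dv_{\g}.
\]
Since $\dim\cC = 3$ and $H = \tr_{\g}A$, the tensor $A - \tfrac13 H\g$ is the trace-free part of $A$; and as $\tr_{\g}(A - H\g) = -2H$, the trace-free part of $\tau = A - H\g$ is also $\tau_0 = A - \tfrac13 H\g$. Hence $\<A - \tfrac13 H\g, h^T\> = \<\tau_0, h^T\> = \<\tau_0, h^T_0\>$ by trace-freeness of $\tau_0$, which is \eqref{vch}. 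For the variational formulation: on $\w\bP_b$ with $b = ([\g],H)$ one has $h^T_0 = 0$ (pointwise conformal class fixed) and $H'_h = 0$ (mean curvature fixed), so $\d I_{CH}(h) = -\int_M\<E,h\>dv_g$ and the critical points are precisely the vacuum metrics; conversely, since $h\mapsto(h^T_0, H'_h)$ surjects onto (trace-free symmetric $2$-tensors on $\cC$)$\times C^\infty(\cC)$, stationarity forces exactly the $([\g],H)$ boundary conditions.

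For \eqref{chh} I would identify the boundary Lagrangian scalar in \eqref{ch} as $\ell = \tfrac23 H$ and substitute into \eqref{Ham2}, $\cH_\xi = \int_{\Si}\big(-2A(\xi,T) + \ell\<\xi,T\>\big)dv_{\g_{\Si}}$, obtaining $\cH_\xi^{CH} = \int_{\Si}\big(-2A(\xi,T) + \tfrac23 H\<\xi,T\>\big)dv_{\g_{\Si}} = -2\int_{\Si}\big(A(\xi,T) - \tfrac13 H\g(\xi,T)\big)dv_{\g_{\Si}} = -2\int_{\Si}\tau_0(\xi,T)\,dv_{\g_{\Si}}$, using $\g(\xi,T) = \<\xi,T\>$ as $\xi,T$ are both tangent to $\cC$. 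The hypothesis that $\xi$ is a conformal Killing field of $(\cC,\g)$ preserving $H$ is exactly what makes $h = L_\xi g$ satisfy $h^T_0 = (L_\xi\g)_0 = 0$ and $H'_h = \xi(H) = 0$, i.e.\ $L_\xi g \in T\w\bP_b$, so that $\cH_\xi^{CH}$ is a genuine Hamiltonian in the sense of \eqref{Ham} and independent of the slice $\Si$ by \S 3. As a consistency check, running the same computation with $\ell = 2H$ recovers the Brown--York Hamiltonian \eqref{byh}.

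I expect the only real subtlety to be the bookkeeping of the boundary-of-boundary terms on $\p\cC$ and the Cauchy-surface terms on $S^\pm$: one must verify that the $\div(h(\nu)^T)$ contribution and the $S^\pm$ contributions are precisely those dropped in the Harlow--Wu prescription, so that no unaccounted term corrupts \eqref{vch}; the role of the correction form $C$ of \eqref{Ch} in \eqref{bc3} should also be cross-checked against this. The other point needing a short argument is the surjectivity of $h\mapsto(h^T_0,H'_h)$ used to upgrade "stationarity is implied by the CH conditions" to "stationarity is equivalent to them"; this holds since $h^T_0$ is the tangential trace-free part of $h|_{\cC}$ and $H'_h$ can be adjusted independently through the normal-normal component $h(\nu,\nu)$.
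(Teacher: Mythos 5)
Your proof is correct and follows essentially the same route as the paper: the paper obtains \eqref{vch} by writing $I_{CH}=I_{GHY}-\tfrac43\int_{\cC}H\,dv_{\g}$ and combining \eqref{tau} with the variation of the mean-curvature term, while you expand directly from the identity \eqref{h} one step earlier, and both read off \eqref{chh} by substituting $\ell=\tfrac23 H$ into \eqref{Ham2}. Your additional checks (the $\p\cC$ and $S^{\pm}$ bookkeeping, the surjectivity of $h\mapsto(h^T_0,H'_h)$ needed for the converse direction, and the $\ell=2H$ consistency check against \eqref{byh}) go beyond the paper's two-line proof but are sound.
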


\begin{proof} The first statement follows from \eqref{tau} and the simple computation $-\frac{4}{3}(\int Hdv_{\g})_h' = 
-\frac{4}{3}\int (H'_h + \frac{1}{2}\tr h) dv_{\g}$.  The second statement also follows easily from \eqref{Ham2}. 

\end{proof}

\begin{remark}
{\rm 
The on-shell actions $I_{GHY}$ and $I_{CH}$ are essentially the same; one has $I_{CH} = {\tfrac{1}{3}}I_{GHY}$ on $\w \bP$,
so that on-shell
\be \label{ot}
\d I_{CH} = {\tfrac{1}{3}}\d I_{GHY}.
\ee
However, the expressions \eqref{byh} and \eqref{chh} for the Hamiltonians associated to $\xi$ are not related in this simple way. 
The reason for this is the different boundary conditions; Dirichlet for $I_{GHY}$ and conformal class-mean curvature for $I_{CH}$. 
Distinct boundary conditions give different spaces $\w \bP_b$ and so different symplectic forms and hence different Hamiltonians. 

  Suppose $\xi$ is a Killing field on $(\cC, \g)$ which also preserves the mean curvature $\xi(H) = 0$, so that $h = L_{\xi}g$ preserves 
both both boundary conditions, i.e.~$h \in T(\w \bP_{\g})\cap T(\w \bP_{([\g],H)})$. By \eqref{Ham2}, the Hamiltonians $\cH_{\xi}^{Dir}$ and 
$\cH_{\xi}^{CH}$ have the same relation \eqref{ot} as the actions, 
\be \label{13}
\cH_{\xi}^{CH} = {\tfrac{1}{3}}\cH_{\xi}^{Dir},
\ee
if and only if the Noether charge vanishes, 
\be \label{q0}
Q_{\xi} = \int_{\Si}A(\xi, T)dv_{\g_{\Si}} = 0.
\ee
We note that it can be shown that if $\xi$ extends to a (stationary) Killing field of the bulk solution $(M, g)$, then \eqref{q0} does hold. 
However, it appears unlikely that \eqref{q0} holds in general. 
}
\end{remark}

  Next we point out that the $\mathrm{Diff}(\cC)$-invariant boundary data $([\g^t], H^*)$ from \cite{AA} may be given a 
variational formulation for a natural choice of $H^*$, at least when the Sommerfeld boundary condition \eqref{bd} for the 
gauge $\Theta_{\cC}$ is modified to the Dirichlet boundary condition 
\be\label{bm}
(\f_g)_*(T_g^c)= \hat \Theta_{\cC}.
\ee
Here $T_g^c$ is the future-pointing time-like unit normal to $\f_g^{-1}(\Si_t)\subset (\cC,g)$ and $\hat \Theta_{\cC}$ is a 
given vector field tangent to $\cC_0$. 

Recall from \S 2 that we work with vacuum solutions in preferred gauge $\f_g$ and let $\bar g = \psi_g^*g$, $\psi_g = \f_g^{-1}$, 
defined on the standard cylinder $M_0$. Let $\tr_{\Sigma}A$ be the trace of $A$ along the level set $\Si_t=\{t=constant\}\subset\cC_0$ 
of the boundary $\cC_0\subset (M_0,\bar g)$. 

\begin{proposition} The action 
\be\label{i}
I_{AA}(\bar g)=\int_{M_0}R_{\bar g} dv_{\bar g}+\int_{\cC} \tr_{\Si}A dv_{\bar \g},
\ee
gives a well-defined variational problem for the boundary data $([\g^t], H^*, \hat \Theta_{\cC})$, where 
\be \label{hst1}
H^* = 2\tr_{\cC}A - \tr_{\Si}A.
\ee
Thus, critical points of the action on the field space $\cF_b = \Met_b(M_0)$ with fixed boundary data $b = ([\g^t], H^*, \hat \Theta_{\cC})$ 
are the vacuum Einstein metrics with given boundary data $b$. 
\end{proposition}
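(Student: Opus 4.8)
The plan is to compute the first variation $\delta I_{AA}(\bar h)$ along variations $\bar h$ tangent to the configuration space $\cF_b = \Met_b(M_0)$, i.e.\ along $\bar h$ preserving the boundary data $b = ([\g^t], H^*, \hat\Theta_\cC)$, and to show that the boundary contribution vanishes identically, leaving $\delta I_{AA}(\bar h) = -\int_{M_0}\langle E_{\bar g}, \bar h\rangle\, dv_{\bar g}$. Since variations supported in the interior of $M_0$ automatically lie in $T\cF_b$, this shows that a critical point of $I_{AA}|_{\cF_b}$ satisfies $E_{\bar g} = 0$ in the interior, hence on all of $M_0$ by continuity up to $\cC_0$; conversely a vacuum metric is automatically a critical point. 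So the content of the proposition reduces to the vanishing of the boundary term, which is a computation on $\cC$.

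First I would record, following \S 3 and modulo the Cauchy-surface terms at $S^{\pm}$ (discarded in the variational problem as in \cite{HW}), that by the identity \eqref{h} and the choice \eqref{Ch} of $C$,
\[
\delta I_{EH}(\bar h) = -\int_{M_0}\langle E_{\bar g}, \bar h\rangle\, dv_{\bar g} - \int_\cC\big(\langle A, \bar h\rangle + 2H'_{\bar h}\big)\, dv_{\bar\g}.
\]
The essential input from \S 2 is the modified Dirichlet gauge \eqref{bm}: for $\bar g \in \cF_b$, the fixed vector field $\hat\Theta_\cC$ \emph{is} the future unit $\bar\g$-normal $T^c$ of the slices $\Si_t \subset \cC_0$. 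Hence for $\bar h \in T\cF_b$ one has $(T^c)'_{\bar h} = 0$; moreover the $T^c$-row and $T^c$-column of $\bar h^T|_{\cC_0}$ vanish (encoding that $T^c$ stays unit and $T^c \perp \Si_t$), and, since $[\g^t]$ is also held fixed, $\bar h^T|_{\cC_0} = \phi\,\sigma$ for a single scalar function $\phi$ on $\cC_0$, where $\sigma = \bar\g|_{\cC_0} + \theta^c \otimes \theta^c$ is the slice metric (with $\theta^c$ $\bar\g$-dual to $T^c$). In particular $\tr_{\bar\g}\bar h^T = 2\phi$ and $\langle A, \bar h^T\rangle = \phi\,\tr_\Si A$.

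Next I would differentiate the boundary term $\int_\cC \tr_\Si A\, dv_{\bar\g}$ along $\bar h \in T\cF_b$. Writing $\tr_\Si A = \sigma_{\Si}^{ij} A_{ij}$ with $\sigma_{\Si}^{ij} = \bar\g^{ij} + T^{ci}T^{cj}$ and using $(T^c)'_{\bar h} = 0$ together with $\bar h^T|_{\cC_0} = \phi\,\sigma$, the variations of $\sigma_{\Si}^{ij}$ and of $dv_{\bar\g}$ exactly cancel, so that $\delta\big(\int_\cC \tr_\Si A\, dv_{\bar\g}\big)(\bar h) = \int_\cC \tr_\Si(A'_{\bar h})\, dv_{\bar\g}$, where $\tr_\Si X = \tr_{\bar\g} X + X(T^c, T^c)$ for a symmetric $2$-tensor $X$ on $\cC$. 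Combining this with the Einstein--Hilbert contribution above and using the standard relation $\tr_{\bar\g}(A'_{\bar h}) = H'_{\bar h} + \langle A, \bar h^T\rangle$, the terms proportional to $\phi$ cancel and the boundary integrand collapses to $-(H^*)'_{\bar h}$, with $H^* = 2\tr_\cC A - \tr_\Si A$ as in \eqref{hst1}:
\[
\delta I_{AA}(\bar h) = -\int_{M_0}\langle E_{\bar g}, \bar h\rangle\, dv_{\bar g} - \int_\cC (H^*)'_{\bar h}\, dv_{\bar\g}.
\]
Since $H^*$ is part of $b$, the remaining boundary term vanishes on $\cF_b$, which completes the argument.

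The only delicate point is the bookkeeping in the previous step: one must carefully distinguish the linearization $H'_{\bar h}$ of the mean curvature (which enters \eqref{h}) from the trace $\tr_{\bar\g}(A'_{\bar h})$ of the linearized second fundamental form, identify which components of $\bar h^T$ and of $A'_{\bar h}$ remain free once $([\g^t], \hat\Theta_\cC)$ is held fixed, and track the variations of the foliation-dependent trace $\tr_\Si$ and of the unit normal $T^c$. One then verifies that the coefficient $2$ in $H^* = 2\tr_\cC A - \tr_\Si A$ is precisely what forces every term other than $-(H^*)'_{\bar h}$ to cancel --- the analogue here of the coefficients $2$ in $I_{GHY}$ and $\tfrac{2}{3}$ in $I_{CH}$ from earlier in \S 4. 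The treatment of the $S^{\pm}$ terms is standard and does not affect the conclusion that criticality on $\cF_b$ is equivalent to the vacuum equations.
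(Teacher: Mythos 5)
Your argument is correct and follows essentially the same route as the paper: both rest on the identity \eqref{h} for the Einstein--Hilbert boundary variation together with the variation of $\int_\cC \tr_\Si A\, dv_{\bar \g}$, and on identifying which components of $\bar h^T$ are eliminated by fixing $[\g^t]$ and $\hat\Theta_\cC$ (equivalently $(T^c)'_{\bar h}=0$, so $\bar h^T$ is pure trace along the slices). The only difference is organizational: the paper computes the boundary variation for arbitrary $\bar h$ and groups it into the three conjugate pairs $\<A,(\bar h_\Si)_0\>$, $(H^*)'_{\bar h}$, $\<q(T),\bar h(T)\>$ (a decomposition it reuses later), whereas you impose two of the three boundary conditions first and verify that the remaining boundary integrand is exactly $-(H^*)'_{\bar h}$.
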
 

\begin{proof} Recall from \eqref{wh} that general variations $\bar h$ of $\bar g$ are of the form 
$\bar h=\psi^*_{g}h-L_{X_h}\bar g$. The variation of the bulk Lagrangian is 
\begin{equation*}
\begin{split}
&\d (R_{\bar g}dv_{\bar g})
=\<-E_{\bar g}, \psi^*h\>dv_{\bar g}+\<E_{\bar g}, L_X\bar g\>dv_{\bar g}+{\rm div}[{\rm div}\bar h- (d\tr\bar h)]dv_{\bar g}\\
&=\<-E_{\bar g}, \psi^*h\>dv_{\bar g}+{\rm div}[2E_{\bar g}(X)+{\rm div}\bar h-(d\tr\bar h)]dv_{\bar g},
\end{split}
\end{equation*}
where we have applied the Bianchi identity in the second equality. Thus the equations of motion generated by $I_{AA}$ are the 
vacuum Einstein equations. Using \eqref{h} or \eqref{bc3}, the on-shell variation of the action $I_{AA}$ with respect to deformation 
$\bar h$ is then easily computed to be 
\be\label{vi}
\d I_{AA}(\bar h) = \int_{\cC}(-\<A,\bar h\>-2(\tr_{\cC}A)'_{\bar h}+(\tr_{\Si}A)'_{\bar h}+\tfrac{1}{2}\tr_{\Si}A\tr_{\cC}\bar h)dv_{\bar \g}.
\ee
Let $\bar h_{\Si}$ be the restriction of $\bar h$ to $\Si_t$ and let $(\bar h_{\Si})_0$ denote its trace-free part. The first and fourth terms in 
\eqref{vi} may then be combined and rewritten as 
$-\<A, \bar h\> + \tfrac{1}{2}\tr_{\Si}A\<\bar \g, \bar h\> = -\<A, (\bar h_{\Si})_0\> - \<q(T), \bar h(T)\>$, where $q(T) = 
2A(T)_{\Si} + (-\tfrac{1}{2}\tr_{\Si}A+ A(T,T))T$. Also, setting $H^* = 2\tr_{\cC}A - \tr_{\Si}A$, i.e.~setting $\a = 2$, $\b = -1$, $\g = 0$ 
in \eqref{Hstar},  \eqref{vi} may be rewritten in the form 
\be\label{vi2}
\d I_{AA}(\bar h) = -\int_{\cC}\<A, (\bar h_{\Si})_0\> + (H^*)'_{\bar h} + \<q(T), \bar h(T)\>)dv_{\bar \g}
\ee
The vanishing of the terms $(\bar h_{\Si})_0$, $(H^*)'_{\bar h}$ and $\bar h(T)$ is equivalent to fixing the boundary data 
$([\g^t], H^*, \hat \Theta_{\cC})$. 

\end{proof}

  We note that for fixed $g$, the Dirichlet boundary condition \eqref{bm} still gives a well-posed IBVP for the wave map $\f_g$. It remains 
open however if the choice of $\hat \Theta_{\cC}$ in place of $\Theta_{\cC}$ leads to a well-posed IBVP for the coupled system 
$(g, \f_g)$ as in \eqref{couple}-\eqref{bcg}. We hope to investigate this elsewhere. 

  In this setting, the phase space $\bar \bP_b$ may be defined as in \eqref{gt} with associated pre-symplectic form $\bar \O$ in 
\eqref{symaa}. Unfortunately, this does not give rise to Hamiltonians associated with generators $\xi \in T\mathrm{Diff}(\cC)$. 
This is partly due to the fact that the boundary conditions are $\mathrm{Diff}(\cC)$ invariant but mainly due to the dependence of the 
preferred gauge $\f_g$ on the choice of the Cauchy surface $S$, as in \eqref{id}. 

\medskip 

 In order to obtain non-trivial Hamiltonians, we consider the same action on an enlarged space consisting of the pairs $(g,F)$ where $F$ 
denotes a general diffeomorphism $F:M\to M_0$ which maps $\cC$ to $\cC_0$. In the following let $\bar g =(F^{-1})^*g$ 
defined on $M_0$.  Choose fixed boundary data $b = \big( [\g^t], H^*,\hat\Theta_\cC, G\big)$, where $\big( [\g^t], H^*,\hat\Theta_\cC\big)$ 
are as above and $G: \cC \to \cC_0$ is an arbitrary but fixed diffeomorphism. Let ${\bf C}_b$ be the configuration space 
\be\label{Cgf}
 {\bf C}_b=\{(g,F):~\bar g=(F^{-1})^*g\mbox{ has boundary data } b = \big( [\gamma^t], H^*,\hat\Theta_\cC\big) \ {\rm with} \  F|_{\cC} = G\}.
\ee

\begin{proposition} The action 
\be\label{bari}
\bar I_{AA}(\bar g)=\int_{M_0}R_{\bar g} dv_{\bar g}+\int_{\cC_0} \tr_{\Si}A_{\bar g} dv_{\bar \g},
\ee
gives a well-defined variational problem on ${\bf C}_b$, for any boundary data $b$ with  
\be \label{hst2}
H^* = 2\tr_{\cC}A - \tr_{\Si}A.
\ee 
Thus, critical points of the action on ${\bf C}_b$ are the vacuum Einstein metrics with given boundary data $b$. 

  A vector field $\xi$ on $\cC$ which preserves the boundary conditions $([\g^t], H^*, \hat \Theta_{\cC})$ has an associated Hamiltonian  
\be \label{Hi2}
\cH_{\xi}^{AA}=\int_{\Si} -2A_{\bar g}(\bar \xi,T_{\bar g})+\<\bar\xi,T_{\bar g}\>\tr_{\Si} A_{\bar g}dv_{\bar\g_{\Si}},
\ee
where $\bar \xi = G_*(\xi)$. 
\end{proposition}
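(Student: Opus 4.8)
The plan is to run exactly the argument of Propositions 4.3--4.4, now on the enlarged configuration space ${\bf C}_b$ of pairs $(g,F)$, and then to read the Hamiltonian off the general formulas \eqref{Ham1}--\eqref{Ham2} of \S 3. Throughout I would write $\bar g=(F^{-1})^*g$, $\psi=F^{-1}$, and keep the conventions of \S 3 (the Cauchy-surface terms $\Si^{\pm}$ are discarded and all pre-symplectic data are evaluated on a slice $\Si\subset\cC_0$). The first assertion is essentially Proposition 4.3 carried out with one extra variational direction; the second is then a direct application of the covariant phase space formalism of \S 3 to the ``spacetime'' $(M_0,\bar g)$ with action $\bar I_{AA}$.

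First I would identify the tangent space: a variation of $(g,F)\in{\bf C}_b$ is a pair $(h,X)$, where $h$ varies $g$ and $X$ is the vector field on $M_0$ generating the variation of $F$, so the induced variation of $\bar g$ is $\bar h=\psi^*h-L_X\bar g$ as in \eqref{wh}; and since $F|_{\cC}=G$ is fixed, $X$ vanishes on $\cC_0$. The bulk variation of $\bar I_{AA}$ is then the one already displayed in the proof of Proposition 4.3, $\<-E_{\bar g},\psi^*h\>dv_{\bar g}+{\rm div}[\,2E_{\bar g}(X)+{\rm div}\,\bar h-d\tr\bar h\,]dv_{\bar g}$, the extra term $2E_{\bar g}(X)$ arising from $L_X\bar g$ via the Bianchi identity; hence the interior equations of motion are $E_{\bar g}=0$, and on-shell the divergence integrates to zero at $\cC_0$ (as $X=0$ there) and one is left precisely with the boundary integral \eqref{vi}, equivalently \eqref{vi2}, for $H^*=2\tr_{\cC}A-\tr_{\Si}A$. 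Because $X$ vanishes to first order on $\cC_0$, $L_X\bar g$ has no purely tangential component there, so it affects neither $(\bar h_{\Si})_0$, nor $(H^*)'_{\bar h}$, nor the term paired with $q(T)$; thus the $F$-variation contributes nothing new at the boundary. Taking $h$ supported in the interior forces $E_{\bar g}=0$; conversely, on ${\bf C}_b$ the data $b=([\g^t],H^*,\hat\Theta_{\cC},G)$ is fixed, so every admissible $\bar h$ annihilates $(\bar h_{\Si})_0$, $(H^*)'_{\bar h}$ and $\bar h(T)$ --- the concluding step of the proof of Proposition 4.3, with $F|_{\cC}=G$ automatically preserved --- and the boundary integral vanishes identically. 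This gives the first assertion, after which $\bar\bP_b$ is built from ${\bf C}_b$ as in \eqref{omega}--\eqref{symaa}.

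For the Hamiltonian I would take $\xi$ on $\cC$ preserving $([\g^t],H^*,\hat\Theta_{\cC})$, extend it to a vector field on $M$ tangent to $\cC$ with flow $\Psi_s\in\mathrm{Diff}(M)$, and consider the curve $s\mapsto(\Psi_s^{*}g,F)$ in ${\bf C}_b$. The key identity is that this induces $\bar g\mapsto\bar\Psi_s^{*}\bar g$ with $\bar\Psi_s=F\circ\Psi_s\circ F^{-1}$, a diffeomorphism of $M_0$ whose restriction to $\cC_0$ is the flow of $\bar\xi=G_*\xi$; since $\bar\xi$ preserves the boundary data, $\bar\Psi_s^{*}\bar g$ again has data $b$, so the curve stays in ${\bf C}_b$, and its tangent $(L_\xi g,0)$ induces $\bar h=L_{\bar\xi}\bar g\in T\bar\bP_b$. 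One is then exactly in the setting of \eqref{Ham1}--\eqref{Ham2}, applied to $\bar I_{AA}$ on $M_0$ with spacetime generator $\bar\xi$ and boundary Lagrangian density $\ell=\tr_{\Si}A_{\bar g}$, and \eqref{Ham2} yields at once $\cH_\xi^{AA}=\int_{\Si}\big(-2A_{\bar g}(\bar\xi,T_{\bar g})+(\tr_{\Si}A_{\bar g})\<\bar\xi,T_{\bar g}\>\big)dv_{\bar\g_{\Si}}$, which is \eqref{Hi2}; that $L_{\bar\xi}\bar g$ satisfies the boundary condition \eqref{bc} is precisely what makes $\cH_\xi^{AA}$ independent of $\Si$.

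The hard part is not either computation in isolation but the passage to ${\bf C}_b$: on $\bar\bP_b$ by itself a boundary generator $\xi\in T\mathrm{Diff}(\cC)$ acts trivially and no Hamiltonian can be attached to it, so one must check that remembering the diffeomorphism datum $F$ (pinned on $\cC$ to $G$) genuinely produces a symplectic structure --- i.e.\ that $\bar\o$ still vanishes on $\cC_0$ along the new $F$-variations, so that $\bar\O$ is slice-independent and the symplectic reduction \eqref{qg} makes sense, now with $\mathrm{Diff}_{\cC}(M)$ contributing to the kernel --- and that the flow of $\bar\xi=G_*\xi$ really is a Hamiltonian flow on the reduced space. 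Getting this bookkeeping right, together with the first-order control of the $F$-variation at $\cC_0$ used in the variational computation, is the delicate point. As everywhere in \S 3, the construction remains formal: well-posedness of the coupled system $(g,\f_g)$ under the Dirichlet-type condition \eqref{bm} is not known, so one assumes throughout that ${\bf C}_b$ and $\bar\bP_b$ are smooth and that $\bar\O$ descends after reduction.
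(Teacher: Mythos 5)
Your proposal is correct and follows essentially the same route as the paper: the first part is Proposition 4.3 rerun with the extra variational direction $(h,X)$, using $X=0$ on $\cC_0$ (from $F|_{\cC}=G$ fixed) to show the $F$-variation adds nothing at the boundary, and the Hamiltonian is then read off from \eqref{Ham2} applied to $\bar\xi=G_*\xi$ with boundary density $\ell=\tr_{\Si}A_{\bar g}$. One small caveat: the vanishing of $(H^*)'_{L_X\bar g}$ does not follow merely from the vanishing of the tangential components of $L_X\bar g$ (mean-curvature variations also involve $h(\nu,\cdot)$ and normal derivatives of $h$); the correct reason is that the flow of $X$ fixes $\cC_0$ pointwise and therefore preserves all geometric data of the boundary, which gives the same conclusion.
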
 

\begin{proof} The proof of the first statement is the same as the proof of Proposition 4.3, with deformation $h$ of $g$ 
replaced by the deformation $(h, X)$ of $(g,F)\in{\bf C}_b$ and with corresponding deformation $\bar h$ of $\bar g$ on $M_0$ 
replaced by the deformation $\bar h=(F^{-1})^*h-L_X\bar g$ of $\bar g$. The same computations as before show that the equations 
of motion generated by $\bar I_{AA}$ are again the vacuum Einstein equations and that $\bar I_{AA}$ gives a well-defined 
variational problem on ${\bf C}_b$ for any $b$ satisfying \eqref{hst2}. 

  The resulting pre-phase space $\bar \bP_b$ is given by 
\be \label{barph}
\bar \bP_b =\{(g,F)\in{\bf C}_b:\mathrm{Ric}_g=0\}.
\ee
The pre-symplectic potential  $\theta$ is given by the same formula as in \eqref{theta} applied to $\bar h$:
$\bar \theta (\bar h)=\star_{\bar g}[\div \bar h-(d\tr\bar h)]$. This leads as before to the pre-symplectic form as in \eqref{symp2}:
\be\label{sympi2}
\bar \O\big( (h_1,X_1), (h_2,X_2)\big) = -\int_S \<\bar \pi'_{\bar h_1}, \bar h_2\> - \<\bar \pi'_{\bar h_2}, \bar h_1\>,
\ee
where $\bar h_i=(F^{-1})^*(h_i)-L_{X_i}\bar g$.

Now let $\xi$ be a vector field on $(M,g)$ such that the deformation $(L_\xi g,X)\in T\bar \bP_b$ for some deformation $X$ of $F$, so that 
$X = 0$ on $\cC$. Then $\xi$ induces a deformation of the pullback metric $\bar g$ on $M_0$ by $\bar h=L_{\bar \xi}\bar g$, where 
$\bar \xi =  F_*(\xi)-X$ on $M_0$. The Hamiltonian generated by $\xi$ is given by the same formula as in \eqref{Ham2} applied to the 
field $\bar\xi$:
$$\cH_{\xi}^{AA}=\int_\Sigma -2A_{\bar g}\big(\bar\xi,T_{\bar g}\big)+\<\bar\xi,T_{\bar g}\>\tr_\Sigma A_{\bar g}dv_{\bar\g_{\Si}}.$$
This gives \eqref{Hi2} since $F = G$ and $X = 0$ on $\cC$. 

Therefore we obtain a nontrivial Hamiltonian which is well-defined when $(L_\xi g,X)\in T\bar \bP_b$, for some $X$. Since $X=0$ on $\cC$, 
this is equivalent to the existence of a vector field $\xi$ on $(M_0,\bar g)$ such that 
\be \label{aabc}
\begin{split}
[L_{\xi} \g^t]_0 = 0,\ \ \xi(H^*) = 0,\ \  L_{\xi}\hat \Theta_\cC = 0
\quad{\rm on}~\cC.
\end{split}
\ee
A simple example is when $\xi = \p_t$ and the data $([\g^t], H^*, \hat \Theta_{\cC})$ are independent of $t$, arising for instance as 
boundary data of a stationary vacuum solution. 

\end{proof} 

  Note that the action $\bar I_{AA}$ imposes no equation of motion on the auxiliary field $F$. Also $F$ does not contribute to the 
Hamiltonian beyond the fixed boundary condition that $F = G$ at $\cC$. Next we point out that $F$ disappears when passing to the 
associated phase space and the resulting phase space agrees with the (standard) phase space defined in \eqref{qg}. 

\begin{proposition}
The phase space 
\be \label{barp}
\bP_b = \bar \bP_b/\cG,
\ee
associated to \eqref{barph} agrees with the vacuum phase space in \eqref{qg}.  
\end{proposition}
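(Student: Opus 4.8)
The plan is to realize $\bar\bP_b$ as a bundle over the pre-phase space $\w\bP_b$ of \eqref{pre}, with the auxiliary field $F$ playing the role of the fibre, and then to recognize the fibre directions as precisely the \emph{additional} degenerate directions of $\bar\O$ that are quotiented out in \eqref{barp}. Fix once and for all a diffeomorphism identifying $M$ with $M_0$ and restricting to $G$ on $\cC$, and consider the projection $L: \bar\bP_b\to\w\bP_b$, $L(g,F)=\bar g=(F^{-1})^*g$. First I would check that $L$ is onto: given a vacuum metric $\bar g$ on $M_0$ with boundary data $([\g^t],H^*,\hat\Theta_\cC)$, choose any diffeomorphism $F: M\to M_0$ with $F|_\cC=G$ — such an $F$ exists since $M\cong I\times S\cong M_0$ and $G$ is a diffeomorphism of the boundary — and note that $(F^*\bar g,F)\in\bar\bP_b$ and $L(F^*\bar g,F)=\bar g$. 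On tangent spaces, as in \eqref{sympi2}, the linearization is $L'(h,X)=\bar h=(F^{-1})^*h-L_X\bar g$, with $X=0$ on $\cC$ because $F|_\cC=G$ is held fixed in ${\bf C}_b$, cf.~\eqref{Cgf}.

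Next I would read off from \eqref{sympi2} that $\bar\O\big((h_1,X_1),(h_2,X_2)\big)$ depends on its arguments only through $\bar h_1,\bar h_2$, and in fact $\bar\O=L^*\w\O$ with $\w\O$ the form \eqref{symp2} on $\w\bP_b$ used to build the phase space \eqref{qg}. In particular $\Ker L'\subset\Ker\bar\O$, so the vertical directions of $L$ are degenerate. Unwinding $\Ker L'$: a pair $(h,X)$ lies in it precisely when $(F^{-1})^*h=L_X\bar g$, equivalently $h=L_Y g$ with $Y=F^*X$ a vector field on $M$ vanishing on $\cC$. A short computation with pullbacks then identifies these with the infinitesimal generators of the diagonal action $(g,F)\mapsto(\Phi^*g,F\circ\Phi)$ of the group $\Di_\cC(M)$ of diffeomorphisms of $M$ fixing $\cC$ pointwise; write $\cG_0\subset\cG$ for the corresponding subgroup. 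A direct manipulation of pullbacks, as in \eqref{gt}, shows that $(g_1,F_1),(g_2,F_2)\in\bar\bP_b$ satisfy $\bar g_1=\bar g_2$ if and only if $(g_2,F_2)=(\Phi^*g_1,F_1\circ\Phi)$ with $\Phi=F_1^{-1}\circ F_2\in\Di_\cC(M)$; hence the fibres of $L$ are exactly the $\cG_0$-orbits and $L$ descends to an isomorphism $\bar\bP_b/\cG_0\cong\w\bP_b$.

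Finally, since $\bar\O=L^*\w\O$ and $L$ is a surjective submersion whose fibres are the $\cG_0$-orbits, the reduction \eqref{barp} can be carried out in two stages: reducing first by $\cG_0$ returns $(\w\bP_b,\w\O)$, and reducing the latter by the group generated by $\Ker\w\O$ — which is exactly the gauge group of \eqref{qg} — yields the phase space \eqref{qg}. Thus $\bP_b=\bar\bP_b/\cG\cong\w\bP_b/\cG$, the vacuum phase space of \eqref{qg}. In particular $F$ drops out entirely: it contributes neither to the reduced symplectic form nor, beyond the fixed value $F|_\cC=G$, to the Hamiltonians \eqref{Hi2}, in agreement with the remarks preceding the Proposition.

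I expect the main obstacle to lie not in the computations — surjectivity of $L$, the identity $\bar\O=L^*\w\O$, and the fibre identification are all routine — but in the book-keeping at the group level: one must check that the gauge group $\cG$ of \eqref{barp} sits in an extension $1\to\cG_0\to\cG\to\cG'\to1$, with $\cG'$ the gauge group of \eqref{qg}, i.e.~that the extra degeneracy of $\bar\O$ produced by the $F$-factor is precisely $\Ker L'$ and nothing more; this uses surjectivity of $L'$ on tangent spaces and integrability of $\Ker\bar\O$. As elsewhere in the paper, the infinite-dimensional manifold structures on $\bar\bP_b$ and $\w\bP_b$, the relevant formal Lie groups, connectedness of $\Di_\cC(M)$, and hence the legitimacy of reducing in stages, are understood formally.
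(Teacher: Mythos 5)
Your argument is correct at the formal level at which the paper itself operates, but it is organized quite differently from the paper's proof. You fibre $\bar \bP_b$ over the space $\w \bP_b$ of vacuum metrics on $M_0$ with the given boundary data via $L(g,F)=(F^{-1})^*g$, observe that $\bar \O = L^*\w \O$, identify the fibres of $L$ with the orbits of the diagonal $\Di_\cC(M)$-action (along which $\bar h = 0$, so their degeneracy is tautological), and then reduce in stages. The paper instead quotients first by a different subgroup, namely the deformations $(0,X)$ of $F$ alone with $g$ fixed, for which $\bar h = -L_X\bar g \neq 0$ and degeneracy already invokes the standard fact that diffeomorphisms vanishing on $\cC$ are gauge; it then quotients by deformations preserving the initial data of $\bar g$ on $S$, and finally shows that pairs whose initial data differ by a diffeomorphism of $S$ fixing $\Si$ are also identified. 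The two routes land in the same place: in your picture the paper's first-stage identification $(g,F)\sim(g,F')$ corresponds to $\bar g' = \Psi^*\bar g$ with $\Psi = F\circ F'^{-1}$ fixing $\cC_0$ pointwise, which is absorbed into your second-stage quotient by $\Ker \w \O$. What your version buys is a cleaner structural statement ($\bar \O$ is literally a pullback along a surjective submersion whose fibres are gauge orbits, so the reduced spaces coincide by reduction in stages); what the paper's version buys is the extra concrete information recorded along the way --- a canonical wave-map representative $F$ in each equivalence class, and an explicit description of the reduced space in terms of classes of initial data $(g_S, K)$ modulo diffeomorphisms equal to the identity on $\cC$ --- which is what feeds into Remark 4.6 and \S 5. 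Both arguments treat the group-level issues (integrability of $\Ker\bar\O$, connectedness, legitimacy of staged reduction) purely formally, so the ``main obstacle'' you flag is no worse than what the paper itself leaves implicit.
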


\begin{proof}  First, observe that for any deformation $X$ of $F$ with $X = 0$ on $\cC$, $(0,X)\in{\rm Ker}\,\w\O$, as in \eqref{og}. 
Thus one may first reduce $\bar \bP$ to the quotient space $\bP_1=\bar \bP/\cG_1$ where $\cG_1$ is generated by vector fields 
$(0,X)\in T\bar \bP_b$. If $(g,F)\in \bar \bP_b$ then $(g, F')\in \bar \bP_b$ for arbitrary diffeomorphisms 
$F':(M,\p M)\to(M_0,\p M_0)$ such that $F'=G$ on $\cC$. All of these are in the same equivalence class in $\bP_1$.\footnote{One should 
restrict here to diffeomorphisms isotopic to the identity but we will forgo such distinctions to simplify the discussion.} A natural 
choice of representative of an equivalence class is given by the pair $(g,F)$ with $F$ solving
\begin{equation*}
\begin{split}
\Box_g F=0\ \ {\rm in}~M,\ \ F=E_0,~F_*(T_g)=E_1\ \ {\rm on}~S,\ \ F=G\ \ {\rm on}~\cC,
\end{split}
\end{equation*}
for a fixed but arbitrary choice of initial data $(E_0, E_1)$; compare with \eqref{wm}-\eqref{bd}. This gives \eqref{barp}, but it 
worthwhile to carry the analysis somewhat further. 

  From the expression of $\bar \O$ in \eqref{sympi2}, it is clear that if the deformation $(h,X)$ preserves the initial data of 
$\bar g$ on $S$, i.e.~$(\bar g_S)'_{\bar h}=0~(K_{\bar g})'_{\bar h}=0$, then $(h,X)\in{\rm Ker}\,\bar \O$. Therefore one may further 
reduce the space $\bar \bP_1$ to 
$$\bar \bP_2 = \bar \bP_1/\cG_2,$$
where $\cG_2$ is generated by such deformations $(h,X)$. 

  Next, assume $[(g_t,F_t)]\in \bP_1$ is a smooth curve of equivalence classes in $\bP_1$ where the initial data of $g_t$ varies 
along a curve of diffeomorphisms: $g_t|_S=\psi_t^*(g_0|_S)$ and $K_{g_t}=\psi_t^*(K_{g_0})$ for $\psi_t\in{\rm Diff}_0(S)$ with 
$\psi = \rm{Id}$ on $\Si$. Consider a new curve $F'_t$ such that $F'_t$ solves 
\begin{equation*}
\begin{split}
\Box_{g_t}F'_t=0\ \ {\rm in}\ \  M,\ \ F'_t= F_0\circ\psi_t\ \ {\rm on }\ \ S,\ \ F'_t=G\ \ {\rm on}\ \ \cC.
\end{split}
\end{equation*}
Clearly $(g_t,F_t)\sim (g_t, F'_t)$ in $\bP_1$. In addition, corresponding to the family $(g_t,F'_t)$, the pullback metrics 
$\bar g_t=((F'_t)^{-1})^*g_t$ with $(F'_t)^{-1} = \psi_t$ have the same initial data, and hence the infinitesimal deformations along this curve 
belong to $\cG_2$. Thus in $\bar \bP_2$, $(g_1,F_1)\sim (g_2,F_2)$ if $g_1$ and $g_2$ have equivalent initial data, modulo 
diffeomorphisms equal to the identity on $\cC$.\footnote{As above with 
diffeomorphisms, we assume there exists a curve $g_t$ connecting $g_1$ and $g_2$.} In particular, $\bP_b$ is a further quotient 
(possibly trivial) of $\bar \bP_2$.  

\end{proof}

  As with \eqref{ES} one would like to establish a one-to-one correspondence between $\bP_b$ and $\cI/{\rm Diff}_b(S)$, for 
$b = \big( [\g^t], H^*,\hat\Theta_\cC)$ as above. This would require proving the well-posedness of the following IBVP:
\be\label{ibvp}
\begin{split}
&\mathrm{Ric}_g=0,~\Box_gF=0\ \ {\rm in}~M,\\
&g|_S = g_S,~K_g=K,~F=E_0,~F_*(T_g)=E_1\ \ {\rm on}~S,\\
&[\bar g^t]=[\g^t],~-2\tr_\cC A_{\bar g}+\tr_{\Sigma}A_{\bar g}=H^*,~T^c_{\bar g}=\hat\Theta_\cC,~F=G\ \ {\rm on}~\cC.
\end{split}
\ee
The system \eqref{ibvp} with Dirichlet boundary condition $\hat\Theta_\cC$ replaced by the Sommerfeld boundary condition 
$(T_{\bar g}+\nu_{\bar g})^T=\Theta_{\cC}$ as in \eqref{bd} on $\cC$ was proved to be well-posed in \cite{AA}. As discussed 
following Proposition 4.3, we hope to discuss the well-posedness of \eqref{ibvp} elsewhere.

\begin{remark}
{\rm Based on the geometric uniqueness result in \cite{AA}, if $(g_1,F_1),(g_2,F_2)$ are solutions to \eqref{ibvp} with the same initial data 
and boundary data, and with $\Theta_{\cC}$ in place of $\hat \Theta_{\cC}$, then they are equivalent in the sense that one is the 
pullback of the other by a diffeomorphism in $\mathrm{Diff}_0(M)$ (fixing $S$ and $\cC$); in particular $g_1$ and $g_2$ are isometric 
in $\mathrm{Diff}_0(M)$. This is consistent with the equivalence relationship in $\bar \bP_b$. 

  On the other hand, if $(g_1,F_1)$ and $(g_2,F_2)$ are solutions to \eqref{ibvp} (again with $\Theta_{\cC}$ in place of $\hat \Theta_{\cC}$) 
with all the same initial and boundary data except that $g_1=\psi^*(g_2)$, $K_{g_1}=\psi^*(K_{g_2})$ on $S$ for some diffeomorphism 
$\psi\neq {\rm Id}_S$ of $S$, it is unknown whether $g_1,g_2$ are equivalent, i.e.~isometric. This is the motivation of introducing the 
preferred wave map in \cite{AA}. However, such elements are equivalent in the phase space $\bar \bP_b$. 
}
\end{remark}

\section{Normalized Hamiltonians}
\setcounter{equation}{0}

As noted in the Introduction, several approaches to the definition of quasi-local energy-momentum or angular momentum of a space-like 
2-surface $\Si$ are based on a Hamiltonian or the closely related Hamilton-Jacobi approach. This was initiated by Brown-York \cite{BY}, 
with modifications developed by numerous authors including Epp, Kijowski, Liu-Yau, Wang-Yau and others; we refer to \cite{Sz} for a detailed 
survey and further references. For all of these definitions, the space $\cB$ of boundary data is the Dirichlet data space $\cB = \Met(\cC)$; the 
various definitions of quasi-local energies (or momenta) differ by a choice of normalization term, i.e.~a choice of zero-point energy (or momenta). 
 
  In general, the starting point of such a Hamiltonian approach is a space-like surface $\Si$ with data on $\Si$ obtained by fixing a choice of 
boundary data $b \in \cB$ restricted to $\Si$. One assumes then that $\Si$ with such data can be extended as a Cauchy slice in a time-like 
hypersurface $\cC = \p M$, where $(M, g)$ is a vacuum space-time. Moreover, one assumes the extension to $\cC$ has a vector 
field $\xi \in T\cC$ with flow preserving the boundary condition $b \in \cB$. For Dirichlet boundary data, this means $\xi$ is a 
Killing field of $(\cC, \g)$, $L_\xi \g = 0$. If $\xi$ is time-like, then the Hamiltonian $\cH_{\xi}$ serves as a notion of bare 
quasi-local energy.

  Recall from \eqref{byh} that the bare Brown-York quasi-local energy is given by 
\be \label{hby}
\cH^{Dir} = -2\int_{\Si}\tau(\xi, T)dv_{\g_{\Si}}, 
\ee
where $T$ is the unit time-like normal to $\Si \subset \cC$. Here the data on $\Si$ consist of the specification of a time-like 
3-metric $\g$ on $\cC$ (Dirichlet data) restricted to $\Si$ for which $\xi$ is a Killing field at $\Si$. The initial choice of Brown-York 
is simply to take 
\be \label{prod}
\g = -dt^2 + \g_{\Si},
\ee
with $\xi = T = \p_t$ representing time translation. A simple computation shows that 
\be \label{hby2}
\cH^{Dir} = -2\int_{\Si}H_{\Si}dv_{\g_{\Si}}, 
\ee
where $H_{\Si} = \tr_{\Si}A$. 

 This expression is of course non-vanishing for general surfaces $\Si \subset \bR^3$ and the prescription of Brown-York is to 
normalize the bare Hamiltonian by subtracting its Euclidean value:
\be \label{hby3}
\cH^{BY} = 2\int_{\Si}(H_{\Si}^0 - H_{\Si})dv_{\g_{\Si}},
\ee
where $H_{\Si}^0$ is the mean curvature of an isometric immersion $\iota (\Si, \g_{\Si}) \to \bR^3$. Thus, the subtraction term is 
the Brown-York Hamiltonian of $\cC_0 = \bR\times \iota(\Si) \subset \bR^{1,3}$ with respect to the standard time translation $t$ 
in Minkowski space. At least in the case of $\g_{\Si}$ with positive Gauss curvature, such an embedding exists by fundamental work 
of Nirenberg and Pogorelov, and is unique (up to congruence) by fundamental work of Cohn-Vossen.  Unfortunately, little is known 
about such existence and uniqueness results outside the case of positive Gauss curvature. 

  More generally, the Wang-Yau energy \cite{WY} considers families of isometric embeddings $(\Si, \g_{\Si}) \to \bR^{1,3}$ and 
chooses the reference cylinder $\cC_0$ to be spanned by arbitrary Minkowski time-translation fields transverse to $\iota(\Si)$. This may be 
viewed as a space-time generalization of the Brown-York prescription and requires developing an understanding of the possible 
isometric immersions $\Si \to \bR^{1,3}$, carried out in detail in \cite{WY}.   
  
\medskip   
  
   Returning to the general expression \eqref{hby}, since Dirichlet data are not well-posed, it is of course not well-understood when 
$(\Si, \g_{\Si})$ can be embedded in $(\cC, \g) \subset (M, g)$ with time-like Killing field $\xi$ on $\cC$, as in \eqref{prod} for instance. 
Thus, \eqref{hby} can only be considered as being formally defined, not well-defined. 

  Note also that even if \eqref{hby} is defined, i.e.~one has a vacuum space-time $(M, g)$ with boundary $(\cC, \g)$ extending $\Si$ and 
possessing a time-like Killing field $\xi$, the Euclidean (Brown-York) or Minkowski (Wang-Yau) subtraction term above may not be defined. 
To make this explicit, consider the space $\Met_0(\cC)$ of Lorentz metrics on $\cC$ with a time-like Killing field $\xi$. Such metrics are 
parametrized by metrics $\Met(\Si)$ on $\Si$, together with a choice of lapse and shift $(N, X)$ on $\Si$. This data is time-independent, 
i.e.~extended to be invariant under the flow of $\xi$, but the data $(N, X)$ may be arbitrarily chosen over $\Si$. However, in $\bR^{1,3}$ 
there is only a 4-dimensional space of time-translation Killing fields; thus most Killing fields on $(\cC, \g)$ cannot be renormalized to 
zero-point energy in this way. 
 
 \begin{remark}
{\rm When the IBVP is well-posed so that \eqref{ES} holds, a Hamiltonian $\cH_{\xi}: \bP_b \to \bR$ becomes a function on the space of 
initial data $\cH_{\xi}: \cI \to \bR$. The normalization terms above then correspond to solutions evaluated on flat initial data 
$(g_S, K)$. Recall that $(g_S, K)$ represent flat initial data if and only if the flat constraint equations hold; in schematic form 
\be \label{flat}
\begin{array}{c}
dK = 0,\\
R_{g_S} + K^2 = 0.
\end{array}
\ee
The traces of these equations give the vacuum Einstein constraint equations \eqref{scal}-\eqref{div}. If \eqref{flat} holds, then by the 
fundamental theorem of hypersurfaces in space-forms, $S$ has an isometric immersion into $\bR^{1,3}$ unique up to isometry of 
$\bR^{1,3}$, (at least when $S$ is simply connected). In the time-symmetric (Brown-York) case, $K = 0$ and so $g_S$ is a flat metric 
on $S \subset \bR^3$, or more precisely an immersion of $S$ into $\bR^3$. However, in general it is not well understood when a 
given geometry at the boundary $\Si$, i.e.~boundary data $b \in \cB$, has a fill-in by a flat metric on $S$, or when such fillings are unique. 

}
\end{remark}

 We consider instead a natural definition of the normalization term that is formally valid in general:
  
\begin{definition} 
Given a Hamiltonian $\cH_{\xi}: \bP_b \to \bR$ associated with a boundary condition $b \in \cB$, define the normalized Hamiltonian by 
$$\hat \cH_{\xi} = \cH_{\xi} - \inf_{\bP_b}\cH_{\xi}.$$
For this to be well-defined, one needs of course 
\be \label{ninf}
\inf_{\cI}\cH_{\xi} > -\infty.
\ee
\end{definition}
When the identification \eqref{ES} holds, so that $\bP_b \simeq \cI$, this gives the more concrete expression
\be \label{normHam}
\hat \cH_{\xi} = \cH_{\xi} - \inf_{\cI}\cH_{\xi}.
\ee
An advantage of this definition is that it immediately implies the positivity property:  
$$\hat \cH_{\xi} \geq 0 \ \ {\rm on} \ \ \cI.$$
Consider first the time-symmetric case where $K = 0$. If there exists a flat solution with time-symmetric initial data, i.e.~flat initial data 
$(g_S^0, 0) \in \cI$ filling in the boundary data $b \in \cB$, then it is natural to ask if the infimum $\inf_{\cI}\cH_{\xi}$ is realized by flat 
data: 
$$\inf_{\cI}\cH_{\xi} = \cH_{\xi}(g_S^0,0),$$
i.e.
$$\hat \cH_{\xi}(g_S^0, K^0) = 0.$$ 
It is also natural to consider the corresponding uniqueness or rigidity issue. For Dirichlet boundary data $\g_{\Si}$ with positive Gauss 
curvature and with $N = 1$ on $\Si$ (the Brown-York case as in \eqref{prod}-\eqref{hby2}), this is proved to be true by a basic result 
of Shi-Tam \cite{ST1}. A generalization of this result to the setting with non-zero $K$ (the space-time setting) has been proved in 
basic work of Wang-Yau \cite{WY}. 

\medskip 

  The term $\inf_{\cI}\cH_{\xi}$ bears a formal similarity with the definition of the Bartnik quasi-local energy \cite{Ba}, which is based on 
minimizing the ADM Hamiltonian for an asymptotically flat space-time in a region exterior to the Cauchy surface $S$; thus the interior 
minimization problem considered here is replaced in the Bartnik program by minimization over a complementary exterior region. 
Bartnik conjectures that such infima are realized by stationary vacuum solutions, i.e.~vacuum solutions for which the asymptotic 
time-like Killing field extends to a time-like Killing field over the full exterior region. It is of course natural to make the analgous 
conjecture in this (interior) setting. This will be discussed in more detail elsewhere. 

  In the case that the Killing field $\xi$ on $\cC$ is hypersurface orthogonal, i.e.~the shift $X$ of $\xi$ satisfies $X = 0$, it is proved in 
\cite{MST} that critical points of $\cH_{\xi}$ on $\cI$ are solutions of the static vacuum Einstein equations 
$$u\mathrm{Ric} = D^2 u, \ \  \D u = 0,$$
on $S$ with $u = N$ on $\Si$, partially confirming this conjecture. It is also proved in \cite{MM} that the bound \eqref{ninf} holds in 
this case.

\medskip 

  Next we consider analogs of the discussion above for other more well-behaved boundary conditions. 
  
   Consider first the $([\g], H)$ data \eqref{ch} which are conjectured to be well-posed. Let $\Si$ be any space-like 2-surface with metric 
$\g_{\Si}$ and let $\xi$ be a vector field on $\cC$ restricted $\Si$. For simplicity, consider here just extensions of this data to $\cC$ as in 
\eqref{prod}, so that $\xi = T = \p_t$. Assume also $H$ is independent of $t$. Then 
$$[\g] = [-dt^2 + g_{\Si}]$$ 
and $\xi$ is a conformal Killing field on $\cC$ preserving the mean curvature $\xi(H) = 0$, so $\xi$ preserves the boundary condition. 
Assuming the IBVP is well-posed for $([\g], H)$ (or instead just proceeding formally), this gives the Hamiltonian  
$$\cH_{\xi}^{CH} = -2\int_{\Si}\<A - \frac{1}{3}H\g, T\cdot T\>dv_{\Si} = -2\int_{\Si}(A(T,T) + \frac{1}{3}H)dv_{\Si}.$$
Since $H = -A(T,T) + H_{\Si}$ and $H$ is given boundary data, this may be re-written in the form  
\be \label{chham}
\cH_{\xi}^{CH} = -2\int_{\Si}(-\frac{2}{3}H + H_{\Si})dv_{\g_{\Si}}.
\ee
Both $H_{\Si}$ and the induced volume form $dv_{\g_{\Si}}$ are determined by the solution $g$, i.e.~initial data $(g_S, K)$.

 Here it appears more difficult to identify a suitable Euclidean subtraction term in general. One possibility is as follows; given $(\Si, \g_{\Si})$, 
as in the Brown-York case suppose there is a unique isometric embedding $\iota: (\Si, g_{\Si}) \to (\bR^3, g_{Eucl})$ into $\bR^3$. Let 
$H^0_{\Si}$ be the mean curvature of the image $\iota(\Si) \subset \bR^3$. Then {\it choose} the boundary condition $H = H^0 = H^0_{\Si}$. 
As above, there is a corresponding conformal embedding $(\cC, [\g]) \to (\bR^{1,3}, g_{Mink})$ and subtracting the Euclidean Hamilonian 
from \eqref{chham} gives 
$$\w \cH_{\xi}^{CH} = -2\int_{\Si}(-\frac{2}{3}(H - H^0) + (H_{\Si} - H^0_{\Si}))dv_{\g_{\Si}}.$$
Since by construction $H = H^0$, this gives 
$$\w \cH_{\xi}^{CH} = -2\int_{\Si}(H_{\Si} - H^0_{\Si})dv_{\g_{\Si}},$$
which is exactly the Brown-York energy again (without the $1/3$ term as in \eqref{13}). One advantage to this approach is that the 
boundary data $([\g], H)$ appear to be better behaved than Dirichlet boundary data. On the other hand, it is not clear if there is any 
suitable Euclidean subtraction term if $H$ is chosen arbitrarily. In this case, the definition \eqref{normHam} may be more suitable. 

 \medskip

  Finally consider the $([\g^t], H^*)$ data from \cite{AA} with $H^* = 2\tr_{\cC}A - \tr_{\Si}A$ as in Proposition 4.4. For simplicity, 
extend given data $([\g^t], H^*)$ on $\Si$ to be independent of $t$ on $\cC$ and set $\xi = \p_t = T  = \hat \Theta_{\cC}$, $G = Id$. 
Then the boundary condition \eqref{aabc} holds and the resulting bare Hamiltonian \eqref{Hi2} is given by 
$$\cH_\xi^{AA}=\int_{\Si} [-2A_{\bar g}(T,T) - \tr_{\Si}A_{\bar g}]dv_{\bar\g_{\Si}}.$$
As in the Brown-York case, consider here for simplicity a Euclidean subtraction term. Thus, given $([\g^t], H^*)$ on $\Si$ as above, let 
$\iota: (\Si, [\g_{\Si}]) \to \bR^3$ be a conformal embedding of $\Si$ into $\bR^3$ with prescribed mean curvature $H^*$. A general 
theory for such embeddings (or more precisely branched immersions) when $H^* > 0$ is developed in \cite{A3}. Such an embedding 
problem is elliptic and in general is much better behaved than the isometric embedding problem associated to Dirichlet boundary data. 
As above, one may extend $\iota$ trivially in the $t$-direction to an embedding of the cylinder $\cC_0$ into $\bR^{1,3}$ with time-like 
Killing field $\xi = \p_t$. This gives a flat vacuum solution with the same boundary data $([\g^t], H^*)$ and with $H^* = H_{\Si}^0$ (since 
$A(T,T) = 0$ on the flat cylinder). This gives a Euclidean normalized Hamiltonian
$$\w \cH_\xi^{AA}=\int_{\Si} [-2A_{\bar g}(T,T)  - \tr_{\Si}A_{\bar g}]dv_{\bar\g_{\Si}} + \int_{\Si}H^*dv_{\iota^*g_{Eucl}},$$
or equivalently 
$$\w \cH_\xi^{AA} =\int_{\Si} H^*(dv_{\bar \g_{\Si}} + dv_{\iota^*g_{Eucl}}) -2\int_{\Si}\tr_{\Si}A_{\bar g}dv_{\bar \g_{\Si}}.$$

\bibliographystyle{plain}

\end{document}